\newtheorem{theorem}{Theorem}[section]
\newtheorem{lemma}[theorem]{Lemma}
\newtheorem{claim}[theorem]{Claim}
\newtheorem{corollary}[theorem]{Corollary}
\newtheorem{definition}[theorem]{Definition}
\newtheorem{observation}[theorem]{Observation}
\newcommand{\ignore}[1]{}
\newcommand{\cD}{\mathcal{D}}
\newcommand{\cL}{{\cal L}}
\newcommand{\cP}{\mathcal{P}}
\newcommand{\cU}{{\cal U}}
\newcommand{\B}{\mathbf{B}}
\newcommand{\be}{{\bf e}}
\newcommand{\R}{\mathbb R}
\newcommand{\eps}{\varepsilon}
\newcommand{\bD}{\boldsymbol{D}}
\newcommand{\bR}{\boldsymbol{R}}
\newcommand{\Exp}{\EX}
\newcommand{\EX}{\hbox{\bf E}}
\newcommand{\Sec}[1]{\hyperref[sec:#1]{\S\ref*{sec:#1}}} 
\newcommand{\Eqn}[1]{\hyperref[eq:#1]{(\ref*{eq:#1})}} 
\newcommand{\Fig}[1]{\hyperref[fig:#1]{Fig.\,\ref*{fig:#1}}} 
\newcommand{\Tab}[1]{\hyperref[tab:#1]{Tab.\,\ref*{tab:#1}}} 
\newcommand{\Thm}[1]{\hyperref[thm:#1]{Theorem\,\ref*{thm:#1}}} 
\newcommand{\Fact}[1]{\hyperref[fact:#1]{Fact\,\ref*{fact:#1}}} 
\newcommand{\Lem}[1]{\hyperref[lem:#1]{Lemma\,\ref*{lem:#1}}} 
\newcommand{\Prop}[1]{\hyperref[prop:#1]{Prop.~\ref*{prop:#1}}} 
\newcommand{\Cor}[1]{\hyperref[cor:#1]{Corollary~\ref*{cor:#1}}} 
\newcommand{\Conj}[1]{\hyperref[conj:#1]{Conjecture~\ref*{conj:#1}}} 
\newcommand{\Def}[1]{\hyperref[def:#1]{Definition~\ref*{def:#1}}} 
\newcommand{\Alg}[1]{\hyperref[alg:#1]{Alg.~\ref*{alg:#1}}} 
\newcommand{\Ex}[1]{\hyperref[ex:#1]{Ex.~\ref*{ex:#1}}} 
\newcommand{\Clm}[1]{\hyperref[clm:#1]{Claim~\ref*{clm:#1}}} 
\def\dist{{\sf dist}}
\def\fii{{f^{(i)}}}
\def\fext{f_{\textrm{\tt ext}}}
\def\ef{g_{\textrm{\tt ext}}}
\def\efg{g'_{\textrm{\tt ext}}}
\def\fia{f^{(i)}_{|\a}}
\def\a{{\mathbf a}}
\def\max{{\sf max}}
\def\pdiext{d_{\textrm{\tt ext}}}
\def\pdiext{\pdi_{\textrm{\tt ext}}}
\def\pdi{{\mathfrak m}}
\def\cPdd{{\cP}}
\newcommand{\set}[1]{\{#1\}}
\newcommand{\frest}[2]{{#1}_{|#2}}
\newcommand{\slice}{S}
\def\VG{\mathcal{G}_{\mathsf{viol}}}
\colorlet{shadecolor}{blue!05}
\begin{document}
\title{$L_p$-Testers for Bounded Derivative Properties on Product Distributions}

\author{
Kashyap Dixit\thanks{Pennsylvania State University, {\tt kashyap@cse.psu.edu}, supported in part by NSF Grant CCF-0964655 and CCF-1320814}
}

\date{}
\maketitle

\begin{abstract}
We consider the problem of $L_p$-testing of class of bounded derivative properties over hypergrid domain with points distributed according to some product distribution. 
This class includes monotonicity, the Lipschitz property, $(\alpha,\beta)$-generalized Lipschitz and many more properties. Previous results for $L_p$ testing on $[n]^d$ for this class 
were known for monotonicity and $c$-Lipschitz properties over uniformly distributed domains. \medskip

Our results imply testers that give the same upper bound for arbitrary product distributions as the hitherto known 
testers, which use uniformly randomly chosen samples from $[n]^d$, for monotonicity and Lipschitz testing. Also, our testers are \emph{optimal} for a large class of bounded derivative properties, 
that includes  $(\alpha, \beta)$-generalized Lipschitz property, over uniform distributions. Infact, each edge in $[n]^d$ is allowed to have it's own left and right Lipschitz constants.
 The time complexity is \emph{same} for arbitrary product distributions.

\end{abstract}
\thispagestyle{empty}
\newpage
\setcounter{page}{1}
\pagenumbering{arabic}

\section{Introduction}
The field of \emph{property testing}~\cite{RS96,GGR98} deals with the following question:
can one decide whether a function $f$ has a certain property or not,
while accessing it only on a tiny fraction of its domain?
To address this informational bottleneck, the decision problem is relaxed to distinguish  
functions having the property from functions which are `far' from having the property.
This needs a measure of \emph{distance}, $\dist(f,g) \in (0,1)$ between functions. A function is said to be
$\eps$-far from the property if $\dist(f,g)>\eps$ whenever $g$ satisfies the property.


The notion of distance is central to property testing.
The conventional definition of distance 
is the Hamming distance over the domain with respect to some distribution $\cD$, that is,  $\dist(f,g) := \Pr_{x \sim \cD} [f(x) \neq g(x)]$.
So, if a function $f$ is not $\eps$-far, then there exists a function $g$ satisfying the property,
and samples drawn uniformly at random
cannot distinguish $f$ and $g$ with probability more than $\eps$. Most of the testers that have been designed in the past give high probability guarantees for when the samples are drawn from the uniform distribution and distance to the property is the hamming distance. Two notable recent detours from this approach are~\cite{BeRaYa14} and \cite{paper}
The former gives the first systematic study of the property testing with the notion of farness being $L_p$-distance from the property. The later gives the \emph{optimal} testers for properties when queries are made over domain points sampled from some product distribution. We unify the two settings and get \emph{better} results for the \emph{$L_p$-property testing over product distributions}.

\Def{lp} formally defines the notion of $L_p$-distance over general distributions. This definition from \cite{BeRaYa14} generalizes the notion of distance given in the seminal work of 
Goldreich, Goldwasser, and Ron~\cite{GGR98}. \cite{BeRaYa14} showed wide applications of $L_p$-testing in the fields of learning, approximation theory, noise suppression etc. The authors in~\cite{paper} 
give optimal testers for a class of properties called bounded derivative properties. Owing to the generality of bounded derivative properties refered to as $\cP$, their work subsumes almost all the works done
 in the field of monotonicity and the Lipschitz $L_0$-property testing over past decade.

In this paper we study the  problem of {\em bounded derivative property testing} of real valued functions  $f:[n]^d\mapsto [a,b]$, with respect to 
a {\em product distribution} $\cD := \prod_{i\leq d}\cD_i$ and $L_p$-distance. First, we show how to extend the testers of~\cite{BeRaYa14} to the product distribution setting. Then we generalize the Lipschitz tester of~\cite{BeRaYa14}
to a much broader class of bounded derivative properties that includes $(\alpha,\beta)$-Lipschitz property, where $\alpha$ and $\beta$ are the left and right Lipschitz constants. 
To be precise, each edge is allowed to have it's own personal set of left and right Lipschitz constants.We note that out results match the adaptive lower bound for $c$-Lipschitz testing 
given in~\cite{BeRaYa14}, which is a special case. Therefore, the upper bound is \emph{optimal} when the bounding family is a set of constant valued functions.

Roughly, a function $f$ is in $\cP$  iff the discrete analogue of it's first derivative is bounded. \Def{bound} formally describes $\cP$.
The problem is to distinguish monotone functions from those which are $\eps$-far with respect to $L_p$-distance defined by $\cD$ over $[n]^d$. Well studied properties like monotonicity, 
the Lipschitz property, $(\alpha,\beta)$-generalized Lipschitz property and many more properties fall in the purview of bounded derivative properties. 
The bounds may be set in such a way that the function is required $c$-Lipschitz in first dimension, $(\alpha,\beta)$-Lipschitz in second dimension, $(\alpha',\beta')$-Lipschitz in third dimension and so on. 

\subsection{Preliminaries}\label{sec:prelims}

%
%
The notion of $L_p$ distance is defined in~\cite{BeRaYa14} which resembles closely to the following. \smallskip
\noindent
\begin{definition}[\bf $L_p$- distance]\label{def:lp} Let $f$ be a real valued function over finite domain $D$. For $p\ge 1$, the $L_p$-norm of $f$ is $(\sum_{x\in D}|f(x)|^p)^{\frac{1}{p}}$ ($p=0,1,2$). 
Let $||f_0||$ be the number of non-zero values of $f$. A property $\cP$ is the set of functions over $D$. For real valued functions $f:D\mapsto [0,r]$, we define the following distance measure.
$$d_{\cP}=\frac{1}{r}\cdot\inf_{g\in P}(\Exp[|f-g|^p])^{\min\{\frac{1}{p},1\}}$$
\end{definition}

In the following, $f:[n]^d \mapsto \bR$ is a fixed function
and $\cD = \prod_{i=1}^d \cD_i$ is a product distribution over $[n]^d$. 
For a subset $X \subseteq [n]^d$, we use $\mu_\cD(X)$ 
to denote the probability mass of the subset $X$, and  $\mu_{\cD_i}$ to denote the marginal along the $i$th dimension. Therefore, for any point $x\in [n]^d$, we have $\mu_\cD(x) = \prod_{i=1}^d \mu_{\cD_i}(x_i)$. 
We let $\cU$ denote the uniform distribution; observe that $\mu_\cU(X) =|X|/n^d$.

A line along dimension $i$, or simply an $i$-line, in $[n]^d$ is a collection of $n$ points which have all but their $i$th coordinates same. 
Given a dimension $i$, we let $\cD_{-i}$ denote the distribution $\prod_{j\neq i}\cD_j$.
Observe that $\cD_{-i}$ is product distribution on $i$-lines.
For any line $\ell$, the restriction of $f$ to the line $\ell$ is denoted by $\frest{f}{\ell}$. Note
that $\frest{f}{\ell}$ has domain $[n]$.

We will use the following theorem from~\cite{BeRaYa14} to prove our result for all $L_p$-norms. Note that since we obtain the upper bounds for product distributions by reducing them to uniform distribution, 
\Thm{bry1} applies directly to our setting as well.
\begin{theorem}[\cite{BeRaYa14}]\label{thm:bry1}
For every property $\cP$ over any domain and all $\varepsilon\in (0,1)$
$$1.~ Q_0(\cP,\varepsilon)\ge  Q_1(\cP,\varepsilon);~~~2.~Q_0(\cP,\varepsilon)\ge  Q_1(\cP,\varepsilon); ~~~3. ~Q_1(\cP,\varepsilon^2)\ge Q_2(\cP, \varepsilon)$$
\noindent
Moreover, if $\cP$ is a property of boolean functions then $Q_0(\cP,\varepsilon)=Q_1(\cP,\varepsilon)=Q_0(\cP,\sqrt{\varepsilon})$.
\end{theorem}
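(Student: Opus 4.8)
The plan is to derive all the assertions from one purely distance-theoretic observation together with a couple of elementary pointwise inequalities, rather than from anything about particular testers. First I would isolate the meta-principle underlying every ``$Q_p\ge Q_q$''-type statement: if, for the fixed property $\cP$, every function that is $\varepsilon'$-far from $\cP$ in the $L_q$ sense is also $\varepsilon$-far from $\cP$ in the $L_p$ sense, then any $\varepsilon$-tester for $\cP$ under $L_p$ is, verbatim and with the same query complexity, also an $\varepsilon'$-tester for $\cP$ under $L_q$: it still accepts every member of $\cP$, and it now rejects every $L_q$-$\varepsilon'$-far input precisely because such an input is $L_p$-$\varepsilon$-far. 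Hence $Q_q(\cP,\varepsilon')\le Q_p(\cP,\varepsilon)$, and the whole theorem reduces to comparing the distance measures $d_0,d_1,d_2$ of \Def{lp}.

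Next I would normalize. Since the definition divides by $r$, assume without loss of generality that all functions under consideration, including those in $\cP$, map into $[0,1]$; then for every $g\in\cP$ and every point $x$ we have $0\le|f(x)-g(x)|\le 1$. On $[0,1]$ the elementary inequalities $t^2\le t\le \mathbf{1}[t>0]$ hold, so pointwise $|f(x)-g(x)|^2\le |f(x)-g(x)|\le \mathbf{1}[f(x)\neq g(x)]$. Taking $\Exp$ under the relevant distribution and then the infimum over $g\in\cP$ (each inequality holds for every fixed $g$, so it survives the infimum) yields the chain $d_2(f,\cP)^2\le d_1(f,\cP)\le d_0(f,\cP)$ together with $d_2(f,\cP)^2\le d_0(f,\cP)$.

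Now I assemble the statements. From $d_1\le d_0$: $d_1(f,\cP)>\varepsilon\Rightarrow d_0(f,\cP)>\varepsilon$, so the meta-principle gives $Q_1(\cP,\varepsilon)\le Q_0(\cP,\varepsilon)$, which is item~1 (and, using $d_2^2\le d_0$ in the same way, the $Q_0$--$Q_2$ comparison that item~2 is meant to record, namely $Q_2(\cP,\varepsilon)\le Q_0(\cP,\varepsilon^2)$). From $d_2^2\le d_1$: $d_2(f,\cP)>\varepsilon\Rightarrow d_1(f,\cP)>\varepsilon^2$, hence $Q_2(\cP,\varepsilon)\le Q_1(\cP,\varepsilon^2)$, which is item~3. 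For the Boolean case, $f$ and $g$ take values in $\{0,1\}$, so $|f(x)-g(x)|\in\{0,1\}$ and therefore $\mathbf{1}[f(x)\neq g(x)]=|f(x)-g(x)|=|f(x)-g(x)|^2$ identically; averaging and taking infima collapses the chain to $d_0(f,\cP)=d_1(f,\cP)$ and $d_2(f,\cP)=\sqrt{d_0(f,\cP)}$, from which $Q_0(\cP,\varepsilon)=Q_1(\cP,\varepsilon)$ is immediate and $Q_2(\cP,\sqrt{\varepsilon})=Q_0(\cP,\varepsilon)$ follows from the square-root relation.

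There is no deep obstacle here; the two points that need care are (i) the normalization step, which is what guarantees $|f-g|\le 1$ pointwise and hence the inequalities $t^2\le t\le \mathbf 1[t>0]$ apply to $|f(x)-g(x)|$ --- this rests on the standing convention that $\cP$ consists of functions sharing the same bounded range, so the minimizing $g$ may be taken in $[0,1]$; and (ii) keeping the direction of each implication straight together with the accompanying reparametrization ($\varepsilon\mapsto\varepsilon^2$ going from $L_2$ to $L_1$, and $\varepsilon\mapsto\sqrt{\varepsilon}$ for the Boolean $L_2$ statement), since reversing either one silently invalidates the reduction.
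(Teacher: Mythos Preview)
The paper does not supply its own proof of this theorem: it is quoted from \cite{BeRaYa14} and used as a black box, so there is nothing in the present paper to compare your argument against. That said, your argument is correct and is precisely the standard one from \cite{BeRaYa14}: reduce everything to the distance inequalities $d_2(f,\cP)^2\le d_1(f,\cP)\le d_0(f,\cP)$, which follow pointwise from $t^2\le t\le\mathbf{1}[t>0]$ on $[0,1]$ after the $1/r$ normalization, and then invoke the generic reduction ``$L_q$-far $\Rightarrow$ $L_p$-far implies $Q_q\le Q_p$''. Your handling of the Boolean case via $|f-g|\in\{0,1\}$ collapsing the chain is likewise the intended argument. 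You also correctly noticed (and charitably reinterpreted) the evident typos in the statement: item~2 duplicates item~1 and should presumably read $Q_0(\cP,\varepsilon^2)\ge Q_2(\cP,\varepsilon)$, and the final displayed equality should end with $Q_2(\cP,\sqrt{\varepsilon})$ rather than $Q_0$.
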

\noindent
Now we formally define the bounding function family associated with some bounded derivative property $\cP$.
\begin{definition}[\cite{paper}] \label{def:bound} An ordered set $\B$ of $2d$ functions 
$l_1, u_1, l_2, u_2, \ldots, l_d, u_d: [n-1] \mapsto \R$
is called a \emph{bounding family} if for all $r \in [d]$ and $y \in [n-1]$, $l_r(y) < u_r(y)$.
Let $\B$ be a bounding family of functions.
The property of being \emph{$\B$-derivative bounded}, denoted as $\cP(\B)$, is the set of functions
$f:[n]^d \mapsto \R$ such that: for all $r \in [d]$ and $x \in [n]^d$,
\begin{equation}
\label{eq:defnbnd}
l_r(x_r) \leq \partial_r f(x) \leq u_r(x_r).
\end{equation}
\end{definition}

We define a quasimetric depending on $\B$ denoted by $\pdi(x,y)$.
\begin{definition}[\cite{paper}] \label{def:dist} Given bounding family $\B$, construct the weighted directed hypergrid $[n]^d$, where 
all adjacent pairs are connected by two edges in opposite directions. The weight of $(x+\be_r,x)$ is $u_r(x_r)$ and the weight
of $(x,x+\be_r)$ is $-l_r(x_r)$. $\pdi(x,y)$ is the shortest path weight from $x$ to $y$.
\end{definition}\noindent
Note that $\pdi$ is asymmetric, can take negative values, and $\pdi(x,y) = 0$ does not necessarily imply $x=y$. 
For these reasons, it is ceases to be a metric, although we will refer to it simply as a metric in the remainder of the paper abusing its closeness to a metric due to the properties stated in \Lem{dist-prop}.
It has been shown in \cite{paper} that
\begin{equation}
\label{eq:supergeneralLip}
\pdi(x,y) :=  \sum_{r:x_r > y_r} \sum_{t = y_r}^{x_r-1}\! u_r(t)  -  \sum_{r:x_r < y_r}\sum_{t = x_r}^{y_r-1}\!l_r(t)
\end{equation}
If a function $f\in \cP(\B)$, then applying \Eqn{defnbnd} on every edge of the path described above (the upper bound when we decrement a coordinate and the lower bound when we increment a coordinate), we get $f(x) - f(y) \leq \pdi(x,y)$ for any pair $(x,y)$. Conversely, if $\forall x,y, f(x) - f(y) \leq \pdi(x,y)$, then considering neighboring pairs gives $f\in \cP(\B)$.
This argument is encapsulated in the following lemma.
\begin{lemma} \label{lem:dist} $f \in \cP(\B)$ iff $~\forall x, y \in [n]^d$, $f(x) - f(y) \leq \pdi(x,y)$.
\end{lemma}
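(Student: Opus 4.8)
The plan is to prove both implications by a single telescoping argument along directed paths in the weighted hypergrid of \Def{dist}. The one observation that drives everything is the following \emph{edge inequality}: for a vertex $x$ and dimension $r$, the directed edge $(x+\be_r,x)$ has weight $u_r(x_r)$ and the directed edge $(x,x+\be_r)$ has weight $-l_r(x_r)$, and the discrete derivative satisfies $\partial_r f(x)=f(x+\be_r)-f(x)$; hence the instance of \Eqn{defnbnd} at $x$ in dimension $r$, namely $l_r(x_r)\le \partial_r f(x)\le u_r(x_r)$, is \emph{equivalent} to the pair of statements ``$f(x+\be_r)-f(x)\le u_r(x_r)$'' and ``$f(x)-f(x+\be_r)\le -l_r(x_r)$'', i.e.\ to the assertion that $f(z)-f(z')\le w(z,z')$ for the two directed edges $z\to z'$ incident to $x$ in dimension $r$.

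For the forward direction, assume $f\in\cPd{\B}$. By the observation above, the edge inequality $f(z)-f(z')\le w(z,z')$ holds for \emph{every} directed edge of the graph. Now fix any $x,y\in[n]^d$ and any directed path $x=z_0\to z_1\to\cdots\to z_k=y$. Summing the edge inequalities and telescoping the left-hand side gives $f(x)-f(y)=\sum_{i=1}^{k}\bigl(f(z_{i-1})-f(z_i)\bigr)\le \sum_{i=1}^{k} w(z_{i-1},z_i)$, the weight of the path. Taking the infimum over all $x$-to-$y$ paths yields $f(x)-f(y)\le\pdi(x,y)$, as required.

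For the reverse direction, assume $f(x)-f(y)\le\pdi(x,y)$ for all $x,y$; I must recover \Eqn{defnbnd} for every $r\in[d]$ and every $x\in[n]^d$ with $x+\be_r\in[n]^d$. Apply the hypothesis to the pair $(x+\be_r,x)$: since $(x+\be_r,x)$ is itself an edge of weight $u_r(x_r)$, we have $\pdi(x+\be_r,x)\le u_r(x_r)$, so $\partial_r f(x)=f(x+\be_r)-f(x)\le u_r(x_r)$. Applying the hypothesis to $(x,x+\be_r)$, which is an edge of weight $-l_r(x_r)$, gives $\pdi(x,x+\be_r)\le -l_r(x_r)$, so $-\partial_r f(x)=f(x)-f(x+\be_r)\le -l_r(x_r)$, i.e.\ $\partial_r f(x)\ge l_r(x_r)$. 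Both bounds of \Eqn{defnbnd} hold, hence $f\in\cPd{\B}$.

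There is no real obstacle here; the argument is a one-line telescoping in each direction. The only points needing care are bookkeeping: (i) in the forward direction one must telescope over an \emph{arbitrary} path and only then pass to the infimum, so that the bound is against $\pdi(x,y)$ and not merely against one path's weight (one could instead use the canonical monotone path, whose weight equals the right-hand side of \Eqn{supergeneralLip}, but the infimum formulation is cleaner and avoids that identity); and (ii) the sign convention, since incrementing a coordinate invokes the lower bound $l_r$ together with the edge weight $-l_r$, while decrementing invokes the upper bound $u_r$ together with the edge weight $u_r$, so that all edge inequalities point in the same direction.
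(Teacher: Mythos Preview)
Your proof is correct and follows essentially the same approach as the paper, which sketches the argument in the paragraph immediately preceding the lemma: telescope \Eqn{defnbnd} along a path for the forward direction, and specialize to neighboring pairs for the converse. The only cosmetic difference is that you telescope over an arbitrary path and pass to the infimum, whereas the paper telescopes directly along the canonical monotone path whose weight is given by \Eqn{supergeneralLip}; both are equally valid.
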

\noindent
When $\cP(\B)$ is monotonicity, $\pdi(x,y) = 0$ if $x \prec y$ and $\infty$ otherwise. For the $c$-Lipschitz property,
$\pdi(x,y) = c\|x-y\|_1$.
The following properties of $\pdi()$ are proved in \cite{paper}.
\begin{lemma}[\cite{paper}]\label{lem:dist-prop} $\pdi(x,y)$ satisfies the following properties.
\begin{asparaenum}
\item {\em (Triangle Inequality.)} For any $x,y,z$, $\pdi(x,z) \leq \pdi(x,y) + \pdi(y,z)$.
\item {\em (Linearity.)} If $x,y,z$ are such that for every $1\leq r\leq d$, either $x_r \leq y_r \leq z_r$ or $x_r \geq y_r \geq z_r$, then 
$\pdi(x,z) = \pdi(x,y) + \pdi(y,z)$.
\item {\em (Projection.)} Fix any dimension $r$. Let $x,y$ be two points with $x_r = y_r$. Let $x'$ and $y'$
be the projection of $x, y$ onto some other $r$-hyperplane. That is, 
$x'_r = y'_r$, and $x'_j = x_j$, $y'_j = y_j$  for $j\neq r$. Then, $\pdi(x,y) = \pdi(x',y')$ and $\pdi(x,x') = \pdi(y,y')$.
\end{asparaenum}
\end{lemma}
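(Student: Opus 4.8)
The plan is to reduce all three parts to the closed form \Eqn{supergeneralLip}, which already displays $\pdi$ as a sum of independent one-dimensional contributions. Write $\pdi(x,y)=\sum_{r=1}^{d}\phi_r(x_r,y_r)$, where for scalars $a,b\in[n]$ we set $\phi_r(a,b)=\sum_{t=b}^{a-1}u_r(t)$ if $a>b$, $\phi_r(a,b)=-\sum_{t=a}^{b-1}l_r(t)$ if $a<b$, and $\phi_r(a,a)=0$. With $\pdi$ in this separable form, each claimed property becomes a statement about the scalar functions $\phi_r$, to be checked one coordinate at a time.

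For the triangle inequality I would first give the one-line argument straight from \Def{dist}: concatenating a shortest $x\to y$ path with a shortest $y\to z$ path yields an $x\to z$ walk of weight $\pdi(x,y)+\pdi(y,z)$, and $\pdi(x,z)$ is by definition the minimum weight over all $x\to z$ walks. A formula-based alternative is to prove $\phi_r(a,c)\le\phi_r(a,b)+\phi_r(b,c)$ for all scalars and sum over $r$; the only nontrivial case is when $b$ fails to lie between $a$ and $c$, where the excess equals $\sum_t (u_r(t)-l_r(t))$ over the overshoot interval, which is nonnegative precisely because $l_r<u_r$ in a bounding family.

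For linearity I use the separable form directly. Under the hypothesis, for each $r$ the scalar $y_r$ lies weakly between $x_r$ and $z_r$. If $x_r\le y_r\le z_r$ then $\phi_r(x_r,z_r)=-\sum_{t=x_r}^{z_r-1}l_r(t)=-\sum_{t=x_r}^{y_r-1}l_r(t)-\sum_{t=y_r}^{z_r-1}l_r(t)=\phi_r(x_r,y_r)+\phi_r(y_r,z_r)$, using only that a summation range splits at an intermediate index; the case $x_r\ge y_r\ge z_r$ is identical with $u_r$ in place of $l_r$, and $x_r=y_r=z_r$ is trivial. Summing $\phi_r(x_r,z_r)=\phi_r(x_r,y_r)+\phi_r(y_r,z_r)$ over $r$ gives $\pdi(x,z)=\pdi(x,y)+\pdi(y,z)$. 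For the projection property, with $x_r=y_r$ and $x',y'$ the projections onto the hyperplane whose $r$-coordinate is $c'$, separability together with $\phi_r(x_r,x_r)=\phi_r(c',c')=0$ gives $\pdi(x,y)=\sum_{j\ne r}\phi_j(x_j,y_j)=\sum_{j\ne r}\phi_j(x'_j,y'_j)=\pdi(x',y')$; and since $x$ and $x'$ differ only in coordinate $r$ (and likewise $y$ and $y'$, with $y_r=x_r$), $\pdi(x,x')=\phi_r(x_r,c')=\phi_r(y_r,c')=\pdi(y,y')$.

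I do not expect a genuine obstacle: the lemma is bookkeeping on top of the product structure of \Eqn{supergeneralLip}, and the bounding-family hypothesis $l_r<u_r$ enters only to upgrade equality to inequality in the triangle inequality when the middle point overshoots. The one point deserving care is pinning down the scalar functions $\phi_r$ precisely — in particular the $a=b$ case — so that the range-splitting identities used for linearity and projection hold verbatim.
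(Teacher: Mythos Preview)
Your proposal is correct. The paper does not actually prove \Lem{dist-prop}; it merely quotes the statement from \cite{paper} and cites that reference for the proof, so there is no in-paper argument to compare against. Your decomposition $\pdi(x,y)=\sum_r\phi_r(x_r,y_r)$ via \Eqn{supergeneralLip} and the coordinate-wise verification of each property is exactly the natural route, and the shortest-path concatenation argument for the triangle inequality is the intended one given \Def{dist}.
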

\begin{definition}[{\bf Violation Graph}]
The violation graph of a function $f$ with respect to property $\cPdd$, denoted as $\VG(f,\cPdd)$ (also $G_f$ in some of the subsequent sections), has $[n]^d$ 
as vertices, and edge $(x,y)$ if it is a violation to $\cPdd$, that is either $f(x) - f(y) > \pdi(x,y)$ or $f(y) - f(x) > \pdi(y,x)$.
\end{definition}

\subsection{Our results}\label{sec:res}\label{sec:results}

Our main result is the $L_p$ tester for bounded-derivative properties over the domains with arbitrary product distributions defined over them.
 As noted in \cite{paper}, this class includes properties like monotonicity~(\cite{DGLRRS99}), the Lipschitz property~(\cite{JR11}), $(\alpha,\beta)$-generalized Lipschitz 
property~(\cite{ChSe13}) and many more (depending on the bounding function family).
In particular, our tester implies the same upper bound of~\cite{BeRaYa14} for monotonicity and $c$-Lipschitz properties over product distribution. 

\begin{theorem}\label{thm:mono}
Consider functions $f:[n]^d \mapsto [a,b]$ equipped with a product distribution $\prod_{i=1}^d \cD_i$ and proximity parameter $\varepsilon\in (0,1)$.  
The time complexity of $L_p$ ($p=\{1,2\}$) testing of monotonicity of $f$ is $O(\frac{d}{\varepsilon^p}\log (\frac{d}{\varepsilon^p}))$. 
\end{theorem}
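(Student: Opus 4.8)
The plan is to reduce $L_p$-testing of monotonicity under the product distribution $\cD$ to the already-known uniform-distribution case, and then to pass between $p=1$ and $p=2$ using \Thm{bry1}. The reduction is by a ``blow-up'' of the hypergrid. Fix $\varepsilon$ and split it as $\varepsilon = \varepsilon_1 + \varepsilon_2$ with, say, $\varepsilon_1 = \varepsilon/3$ and $\varepsilon_2 = 2\varepsilon/3$. For each dimension $i$ I would approximate the marginal $\mu_{\cD_i}$ by a rational distribution with a common denominator $M_i$: choose integers $m_{i,1},\dots,m_{i,n}\ge 1$ with $\sum_j m_{i,j}=M_i$ and $|m_{i,j}/M_i - \mu_{\cD_i}(j)|$ small enough that replacing $\cD_i$ by $j\mapsto m_{i,j}/M_i$ perturbs $L_p$-distances by at most $\varepsilon_1/d$ (possible with finite $M_i$; if the $\mu_{\cD_i}$ are not given explicitly they can be estimated to this accuracy with $\poly(n,d,1/\varepsilon)$ samples). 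Replace coordinate value $j$ by a block $B_{i,j}$ of $m_{i,j}$ consecutive points in a refined axis $[M_i]$, let $\phi_i:[M_i]\to[n]$ be the order-preserving block-collapse map, set $\phi=(\phi_1,\dots,\phi_d)$, and define $\tilde f := f\circ\phi$ on the blown-up grid $[M_1]\times\cdots\times[M_d]$.

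Two facts drive the reduction. First, each $\phi_i$ is monotone, so $\tilde f$ is monotone iff $f$ is. Second, the uniform $L_p$-distance of $\tilde f$ to monotonicity equals the $\cD'$-distance of $f$ to monotonicity, where $\cD'=\prod_i(m_{i,\cdot}/M_i)$ is the rational approximation of $\cD$: the ``$\le$'' direction lifts a monotone approximant $g$ on $[n]^d$ to $g\circ\phi$ and uses that $\phi$ pushes $\mu_\cU$ forward to $\cD'$; the ``$\ge$'' direction is the one needing care — I would show that a (near-)optimal monotone $L_p$-approximant $h$ of $\tilde f$ may be taken constant on each block of $\phi$, by replacing $h$ on each $B_j$ by its $L_p$-optimal constant $m_{B_j}$ (the median of $\{h(z):z\in B_j\}$ for $p=1$, the mean for $p=2$), which only decreases the error on that block and preserves monotonicity, since for $j\preceq j'$ one has $\max B_j \preceq \min B_{j'}$ and hence $m_{B_j}\le h(\max B_j)\le h(\min B_{j'})\le m_{B_{j'}}$. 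Block-constant monotone functions on the blown-up grid correspond exactly to monotone functions on $[n]^d$, so the two distances match. Combining with the perturbation bound (a union bound over the $d$ marginals), $f$ being $\varepsilon$-far from monotonicity under $\cD$ implies $\tilde f$ is $\varepsilon_2$-far under $\mu_\cU$, while $f$ monotone implies $\tilde f$ monotone.

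Now run the known uniform-distribution $L_1$-monotonicity tester of \cite{BeRaYa14} on $\tilde f$ with proximity parameter $\varepsilon_2$. Its complexity is $O(\tfrac{d}{\varepsilon_2}\log\tfrac{d}{\varepsilon_2}) = O(\tfrac{d}{\varepsilon}\log\tfrac{d}{\varepsilon})$ and, crucially, does not depend on the domain size, so blowing up the grid is free. Every primitive it uses can be simulated from samples of $\cD$ and value queries to $f$: a uniform sample $z$ of the blown-up grid is obtained by drawing $x\sim\cD$ and a uniform representative inside $\phi^{-1}(x)$, with $\tilde f(z)=f(x)$; a value query to $\tilde f$ becomes a value query to $f\circ\phi$; and any point the tester needs on a line of the blown-up grid is located from the block boundaries $m_{i,j}$ (this is where explicit or estimated knowledge of the marginals is used). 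The bookkeeping is polynomial, so the simulated tester runs in time $O(\tfrac{d}{\varepsilon}\log\tfrac{d}{\varepsilon})$, giving the $p=1$ case.

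For $p=2$: the whole construction turns $L_2$-testing under $\cD$ into $L_2$-testing under $\mu_\cU$, so \Thm{bry1}(3) applies verbatim and yields $Q_2(\MON,\varepsilon)\le Q_1(\MON,\varepsilon^2)$, i.e.\ time complexity $O(\tfrac{d}{\varepsilon^2}\log\tfrac{d}{\varepsilon^2})$. Both cases are therefore $O(\tfrac{d}{\varepsilon^p}\log\tfrac{d}{\varepsilon^p})$. I expect the main obstacle to be exactly the ``$\ge$'' direction of the distance-preservation step (that an optimal monotone $L_p$-approximant can be taken block-constant) together with correctly apportioning $\varepsilon$ and choosing the $m_{i,j}$ so the accumulated error over all $d$ coordinates stays within budget; the rest is a faithful but routine simulation of the uniform tester.
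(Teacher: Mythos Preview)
Your approach is essentially the paper's: reduce to the uniform case via the bloated hypergrid (\Thm{bhg}), invoke the uniform $L_1$ monotonicity tester of \cite{BeRaYa14} (whose complexity is independent of the domain size, so the blow-up is free), and pass to $L_2$ via \Thm{bry1}.

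Two remarks on the execution. First, your block-constancy step has a slip: replacing $h$ on a block $B$ by the \emph{median} of $\{h(z):z\in B\}$ need not decrease $\sum_{z\in B}|c-h(z)|$ when $\tilde f\equiv c$ on $B$ (take $h=(0,0,3)$ and $c=3$: the median is $0$, giving block error $9$ versus the original $6$). The \emph{mean} works for every $p\ge 1$ by Jensen applied to the convex map $t\mapsto|c-t|^p$, and means of a monotone $h$ over $\preceq$-ordered blocks remain ordered, so monotonicity is preserved. Second, the paper sidesteps this whole averaging argument more cleanly: it equips $[N]^d$ with the pulled-back quasimetric $\pdiext(v,w):=\pdi(\Phi(v),\Phi(w))$, which vanishes on pairs in the same block, so by \Lem{dist} \emph{every} function in the extended property is automatically block-constant and hence descends to $[n]^d$. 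Finally, your rational-approximation-with-error-budget is unnecessary overhead; the paper simply assumes rational marginals without loss of generality, which is harmless since the tester's complexity is independent of the common denominator $N$.
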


We also obtain first non-trivial bounds for other bounded derivative properties where the bounding function family is a set of constant valued functions. 

\begin{theorem} \label{thm:main-known} Consider functions 
$f:[n]^d \mapsto [a,b]$ equipped with a product distribution $\prod_{i=1}^d \cD_i$ and bounding family of functions $\B=\{l_i:[n-1]\mapsto S_l,u_i:[n-1]\mapsto S_u\}_{i=1}^{d}$
 corresponding to the property $\cP(\B)$. There is a tester for $\cP(\B)$ (sometimes refered as $\cP$) with running time $O(\frac{d}{\varepsilon^p})$. 
\end{theorem}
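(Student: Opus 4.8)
I would obtain the bound through three reductions, along the same lines as the monotonicity argument behind \Thm{mono}, but using crucially that here the bounding family consists of (finite) constant functions, which is exactly what lets one avoid the logarithmic factor appearing in \Thm{mono}. First, reduce $L_p$-testing to $L_1$-testing: by the third inequality of \Thm{bry1}, $Q_2(\cP(\B),\eps)\le Q_1(\cP(\B),\eps^2)$, and the $p=1$ case is just $Q_1$ at parameter $\eps$; hence it suffices to build an $L_1$-tester of time complexity $O(d/\eps)$, since substituting $\eps\mapsto\eps^2$ then yields the $p=2$ bound $O(d/\eps^2)$, and both are $O(d/\eps^p)$. As \Thm{bry1} holds for every property over any domain, this step is oblivious to $\cD$. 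The remaining two reductions are: (ii) reduce testing of $\cP(\B)$ under $\cD=\prod_i\cD_i$ to testing under a uniform distribution on a larger hypergrid; and (iii) $L_1$-test $\cP(\B)$ under the uniform distribution via a local, edgewise characterization of the $L_1$-distance.

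\textbf{Reducing the product distribution to the uniform one.} Closely following \cite{paper}, I would first assume each marginal $\cD_i$ is rational with common denominator $N$ (the general case by a continuity/density argument, using that $d_{\cP(\B)}(f)$ and the acceptance probability of the tester below depend continuously on $(\cD_1,\dots,\cD_d)$). Inflate dimension $i$ by replacing each value $j\in[n]$ with a block of $N\mu_{\cD_i}(j)$ consecutive coordinates, obtaining $\prod_i[M_i]$ with uniform distribution $\cU$, extend $f$ to $\widetilde f$ by making it constant on every block in every coordinate, and let $\widetilde\B$ keep $l_r,u_r$ on the edges that cross a block boundary while forcing derivative $0$ (a degenerate interval, treated by a limiting argument) on the edges interior to a block. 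Then every member of $\cP(\widetilde\B)$ is automatically block-constant, so by \Lem{dist} and the linearity and projection parts of \Lem{dist-prop} the optimal $L_1$-correction of $\widetilde f$ under $\cU$ is the block-constant lift of the optimal $L_1$-correction of $f$ under $\cD$, and $d_{\cP(\widetilde\B)}(\widetilde f)=d_{\cP(\B)}(f)$ (the normalization works out because $\prod_iM_i=N^d$). The new point over \cite{paper} is doing the bookkeeping for the $L_1$-distance rather than the Hamming distance.

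\textbf{The uniform $L_1$-tester.} Over $\cU$ on a hypergrid, use the tester that draws $O(d/\eps)$ independent pairs $(x,x+\be_i)$ with $i$ uniform in $[d]$ and $x$ uniform, and rejects iff some sampled pair has $f(x+\be_i)-f(x)\notin[l_i(x_i),u_i(x_i)]$. Completeness is immediate from \Def{bound}. Soundness rests on the characterization lemma $d_{\cP(\B)}(f)\le \frac{C}{r}\sum_{i=1}^d\EX_{x\sim\cU}[\,\mathrm{exc}_i(x)\,]$ for an absolute constant $C$, where $\mathrm{exc}_i(x):=\max\{0,f(x+\be_i)-f(x)-u_i(x_i)\}+\max\{0,l_i(x_i)-f(x+\be_i)+f(x)\}$ is the edgewise violation; note $\mathrm{exc}_i(x)=O(r)$ always (here $r=b-a$), using that $\cP(\B)$ is nonempty. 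Granting the lemma, if $f$ is $\eps$-far then $\sum_i\EX_x[\mathrm{exc}_i(x)]\ge \eps r/C$, so $\EX_{i,x}[\mathrm{exc}_i(x)]=\Omega(\eps r/d)$, and since $\mathrm{exc}_i(x)=O(r)$ a uniformly random pair $(i,x)$ has $\mathrm{exc}_i(x)>0$ with probability $\Omega(\eps/d)$; hence $O(d/\eps)$ samples reject with probability at least $2/3$. Pulling this back through the inflation (so that the tester actually samples $x\sim\cD$ and inspects $(x,x+\be_i)$) and combining with the first two reductions gives the $O(d/\eps^p)$ tester for every product distribution.

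\textbf{Main obstacle.} The crux is the characterization lemma — equivalently, turning a function whose average edgewise violation is $\delta$ into a member of $\cP(\B)$ that is within $O(\delta)$ in $L_1$. For $c$-Lipschitz this is (the reduced form of) the correction procedure in \cite{BeRaYa14}; the work here is to extend it to an arbitrary constant bounding family — a separate pair $l_r<u_r$ of constants per dimension, or even per edge — while keeping the constant absolute and, in particular, free of any $\log n$. That $\log n$-freeness is precisely what distinguishes \Thm{main-known} from \Thm{mono} and from the $L_0$ bounds of \cite{paper}, and it is where the finiteness of the bounds (as opposed to the $+\infty$ upper bound implicit in monotonicity) is used. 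Making the inflation of the second step rigorous for irrational marginals and for the degenerate interior intervals is comparatively routine but needs care.
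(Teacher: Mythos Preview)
Your first two reductions are fine and essentially what the paper does: the $L_p\to L_1$ step is \Thm{bry1} verbatim, and your inflation is the paper's bloated-hypergrid reduction (\Sec{uniftogen}, \Thm{bhg}) recast for $L_1$. The gap is entirely in your third step.

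The characterization lemma you posit --- $d_{\cP(\B)}(f)\le \frac{C}{r}\sum_i\EX_x[\mathrm{exc}_i(x)]$ with $C$ absolute --- is \emph{false}, already for $c$-Lipschitz on the line. Take $f:[n]\to[0,r]$ with $f(x)=r$ for $x\le n/2$ and $f(x)=0$ otherwise. There is exactly one edge with positive excess, namely $\mathrm{exc}(n/2)=r-c$, so $\sum_x\mathrm{exc}(x)\approx r$. But any $c$-Lipschitz $g$ needs $\Theta(r/c)$ steps to descend from $r$ to $0$, and the optimal $g$ incurs $L_1(f,g)=\Theta(r^2/c)$. Thus $L_1(f,\cP)/\sum_x\mathrm{exc}(x)=\Theta(r/c)$, which is unbounded since $r=b-a$ is a free parameter while $c$ is fixed. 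Translating to your tester: this $f$ has $\eps=d_\cP(f)=\Theta\!\left(\tfrac{r}{cn}\right)$, yet a uniformly random edge is violated with probability only $\tfrac{1}{n-1}=\Theta\!\left(\tfrac{c}{r}\cdot\eps\right)$, so your edge tester needs $\Theta\!\left(\tfrac{r}{c}\cdot\tfrac{1}{\eps}\right)$ samples, not $O(1/\eps)$. The finiteness of $u$ is not enough; a single edge violation of magnitude $\approx r$ forces $\Theta(r/c)$ units of $L_1$-correction but contributes only $\approx r$ to your edgewise sum.

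This is precisely why the paper does \emph{not} use an edge tester. After the same product-to-uniform reduction, it proves $L_1(f,\cP)$ equals the maximum-weight matching in the violation graph (\Lem{l1-mwm}), pushes this through a dimension reduction to lines (\Lem{grid-to-line-gen}), and then on each line samples pairs $(x,y)$ at distance up to $r/u_m$ --- not just adjacent pairs. The point is that a jump of size $r$ at a single edge creates $\Theta(r/u_M)$ violated pairs at larger separations (\Clm{lin}, \Clm{many}), and it is this amplification that turns an $\eps$-far line into one with an $\Omega(\eps)$ fraction of violated pairs in the sample space $P_f$ (\Lem{prob}). If you want to salvage your outline, you must replace the edge tester by one that probes long-range pairs and correspondingly replace the edgewise lemma by a weighted-matching argument; at that point you will have rediscovered the paper's route.
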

In \Thm{main-known}, the sets $S_l$ and $S_u$ are finite sets ($|S_u|\le n-1, |S_l|\le n-1$) of constants that are independent of $n$ and $b-a$. Note that the running time of tester in \Thm{main-known} matches 
the known non-adaptive lower bound for $c$-Lipschitz property (refer \cite{BeRaYa14}) which is a special case of $\cP$. Therefore, this is upper bound is \emph{optimal} for uniformly distributed domains.
In rest of the paper, we will be working with $L_1$ distance only. All the results can be extended to $L_2$ distance using \Thm{bry1}.

{\bf Scope of this work.} \Thm{main-known} covers a large set of properties including $(\alpha, \beta)$-Lipschitz property where $\alpha$ and $\beta$ are left and right Lipschitz constants. Infact it allows 
each edge in the hypergrid to have it's own personal left and right Lipschitz constants. Still, this is a strict subset of bounded-derivative properties as defined in~\cite{paper}. In particular, monotonicity testing is out of the 
scope of \Thm{main-known} because right derivative bound is not constant. The lower bound in~\cite{BeRaYa14} shows that such a bound cannot be achieved by any non-adaptive tester.

\subsection{Related work}\label{sec:rel}
The area of property testing has florished in the last fifteen years. A lot of work has been done for testing the bounded derivative properties like monotonicity~\cite{EKK+00, GGLRS00,DGLRRS99,LR01,FLNRRS02,AC04,E04,HK04,PRR04,ACCL07,BRW05,BGJRW09,BCG+10,BBM11,ChSe13,ChSe13-2,BeRaYa14,BlJh+13,paper} and the Lipschitz property~\cite{JR11, AJMS12, ChSe13, DiJh+13, BlJh+13,paper}.
With the exception of~\cite{HalevyK07, HK04,AC04, DiJh+13,paper}, all the previous works are in the uniform distribution setting. The work in~\cite{paper} shows tight non-adaptive upper bounds with adaptive lower bounds for their
 bounded derivative property tester over product distributions, thus resolving the open question posed by~\cite{AC04} about specific question of monotonicity testing over product distributions.

Goldreich et. al.~\cite{GGR98} had already posed the question of testing properties of functions over non-uniform distributions,
and obtain some results for dense graph properties. A serious study of the role of distributions
was undertaken by Halevy and Kushilevitz~\cite{HalevyK07,HK04,HalevyK05,HalevyK08}, who
formalized the concept of distribution-free testing. (Refer to Halevy's thesis~\cite{Hal-thesis} for a comprehensive study.)
Glasner and Servedio~\cite{GlasnerS09} and Dolev and Ron~\cite{DolevR11} give various upper and lower bounds for
distribution-free testers for various classes of functions  (not monotonicity) over $\set{0,1}^n$.


The field of $L_p$-testing is still relatively very little explored for $p=1,2$. The work by~\cite{FV13} shows gives an $L_1$-testers for submodularity. $L_1$-distance is widely used to study the properties of 
distributions~\cite{BFRSW13,BFRV11,CDVV14,DDSVV13,Valiant11,VV11}. Recently,~\cite{BeRaYa14} has given a systematic study of $L_p$-testing for various properties over uniformy distributed hypergrid domain $[n]^d$. 
They give many applications of $L_p$-property testing in various areas of computing.

\section{Reducing from arbitrary product distributions}\label{sec:uniftogen}
We reduce arbitrary product distributions to uniform distributions on what we call the bloated hypergrid.
Assume without loss of generality that all $\mu_{\cD_r}(j) = q_r(j)/N$, for some integers $q_r(j)$ and $N$.
Consider the $d$-dimensional $N$-hypergrid $[N]^d$. There is a natural many-to-one mapping from  $\Phi: [N]^d\mapsto [n]^d$ defined as follows. First fix a dimension $r$. 
Given an integer $1\leq t \leq N$, let $\phi_r(t)$ denote the index $\ell \in [1,n]$ such that 
$\sum_{j <\ell} q_r(j) < t \leq \sum_{j\leq \ell} q_r(j)$. That is, partition $[N]$ into $n$ contiguous segments of lengths $q_r(1),\ldots,q_r(n)$. Then $\phi_r(t)$ is the index of the segment where $t$ lies. The mapping $\Phi: [N]^d\mapsto [n]^d$ is defined as 
\[\Phi(x_1,x_2\ldots,x_d) = \left(\phi_1(x_1), \phi_2(x_2), \ldots, \phi_\pdi(x_d) \right).\]
We use $\Phi^{-1}$ to define the set of preimages, so $\Phi^{-1}$ maps a point in $[n]^d$ to a `cuboid' in $[N]^d$. 
Observe that for any $x\in [n]^d$,
\begin{equation}\label{eq:obs}
|\Phi^{-1}(x)| = N^d\prod_{r=1}^d \mu_{\cD_r}(x) = N^d\mu_\cD(x).
\end{equation}
\begin{claim}\label{clm:XtoZ}
For any set $X\subseteq [n]^d$, define $Z \subseteq [N]^d$ as $Z := \bigcup_{x\in X}\Phi^{-1}(x)$. Then $\mu_\cD(X) = \mu_\cU(Z)$.
\end{claim}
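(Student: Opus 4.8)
The plan is to reduce the claim to a direct counting argument resting on the already-established identity \Eqn{obs}. First I would check that the family $\{\Phi^{-1}(x) : x \in [n]^d\}$ partitions $[N]^d$. Fix a dimension $r$. Since $\sum_{j} \mu_{\cD_r}(j) = 1$ we have $\sum_{j=1}^n q_r(j) = N$, so the $n$ contiguous blocks $\{t : \sum_{j<\ell} q_r(j) < t \le \sum_{j\le \ell} q_r(j)\}$, $\ell \in [n]$, genuinely tile $[N]$; hence $\phi_r$ assigns to every $t \in [N]$ a single well-defined index in $[n]$. Taking the product over the $d$ coordinates, $\Phi$ is a well-defined (single-valued, everywhere-defined) map $[N]^d \to [n]^d$, and therefore the preimages $\Phi^{-1}(x)$, $x \in [n]^d$, are pairwise disjoint and cover $[N]^d$.

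Next, because $Z = \bigcup_{x\in X}\Phi^{-1}(x)$ is a union of pairwise disjoint sets, its cardinality is additive: $|Z| = \sum_{x\in X}|\Phi^{-1}(x)|$. Substituting \Eqn{obs}, which gives $|\Phi^{-1}(x)| = N^d\mu_\cD(x)$ for each $x$, yields
\[
|Z| = \sum_{x\in X} N^d\mu_\cD(x) = N^d\sum_{x\in X}\mu_\cD(x) = N^d\mu_\cD(X).
\]
Finally I would invoke the definition of the uniform distribution on the bloated hypergrid, $\mu_\cU(Z) = |Z|/N^d$, to conclude $\mu_\cU(Z) = N^d\mu_\cD(X)/N^d = \mu_\cD(X)$.

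There is no real obstacle here: all the mathematical content sits in the earlier observation \Eqn{obs}, and the remainder is additivity of counting measure over a disjoint union followed by normalization. The only point deserving an explicit sentence is the well-definedness of $\Phi$ — equivalently, that the integers $q_r(j)$ sum to exactly $N$ so that the segments partition $[N]$ — which is exactly what guarantees the disjointness of the preimages used in the second step.
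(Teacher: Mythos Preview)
Your proposal is correct and follows essentially the same approach as the paper: disjointness of the preimages, additivity of cardinality, substitution of \Eqn{obs}, and normalization by $N^d$. The only difference is that you spell out the well-definedness of $\Phi$ (i.e., that the $q_r(j)$ sum to $N$ so the segments partition $[N]$), which the paper simply asserts with ``since preimages are disjoint.''
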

\begin{proof}
The set  $Z = \bigcup_{x\in X}\Phi^{-1}(x)$ is the union of all the preimages of $\Phi$ over the elements of $X$. Since preimages are disjoint, we get
$|Z| = \sum_{x\in X}|\Phi^{-1}(x)| = N^d\mu_\cD(X)$. Therefore, $\mu_\cU(Z) = \mu_\cD(X)$.
\end{proof}
Given $f:[n]^d \mapsto \R$, we define its extension $\fext:[N]^d \mapsto \R$: 
\begin{equation}
\label{eq:fext}\fext(x_1,\ldots,x_d) = f(\Phi(x_1,\ldots,x_d)). 
\end{equation}
Thus, $\fext$ is constant on the cuboids in the bloated hypergrid corresponding to a point in the original hypergrid. 
Define the following metric on $[N]^d$. 
\begin{equation}\label{eq:dext}
\textrm{For $x,y \in [N]^d$,} \quad \pdiext(x,y) =\pdi(\Phi(x),\Phi(y)) 
\end{equation}
The following statements establish the utility of the bloated hypergrid, and the proof of the dimension reduction of $f$ over $[n]^d$ w.r.t. $\cD$ follows easily from these and the proof for the uniform distribution.
\begin{lemma}[\cite{paper}] \label{lem:distconsistency}
If $\pdi$ satisfies the conditions of \Lem{dist-prop} over $[n]^d$, then so does
$\pdiext$ over $[N]^d$.
\end{lemma}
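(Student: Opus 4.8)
The plan is to verify each of the three properties in \Lem{dist-prop} for $\pdiext$ by pulling it back through the map $\Phi$ to the corresponding property of $\pdi$ on $[n]^d$, using only the defining identity $\pdiext(x,y) = \pdi(\Phi(x),\Phi(y))$ together with the structural fact that $\Phi$ acts coordinatewise (i.e.\ $\Phi(x)_r = \phi_r(x_r)$, and each $\phi_r$ is a \emph{monotone nondecreasing} surjection $[N] \to [n]$). The triangle inequality is immediate: for $x,y,z \in [N]^d$,
\[
\pdiext(x,z) = \pdi(\Phi(x),\Phi(z)) \leq \pdi(\Phi(x),\Phi(y)) + \pdi(\Phi(y),\Phi(z)) = \pdiext(x,y) + \pdiext(y,z),
\]
where the middle inequality is the triangle inequality for $\pdi$ from \Lem{dist-prop}(1). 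So there is no real content there.

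For linearity, suppose $x,y,z \in [N]^d$ satisfy, for every $r$, either $x_r \le y_r \le z_r$ or $x_r \ge y_r \ge z_r$. Since $\phi_r$ is monotone nondecreasing, applying $\phi_r$ preserves each chain of inequalities: either $\phi_r(x_r) \le \phi_r(y_r) \le \phi_r(z_r)$ or $\phi_r(x_r) \ge \phi_r(y_r) \ge \phi_r(z_r)$. Hence $\Phi(x), \Phi(y), \Phi(z)$ satisfy the hypothesis of \Lem{dist-prop}(2) on $[n]^d$, giving $\pdi(\Phi(x),\Phi(z)) = \pdi(\Phi(x),\Phi(y)) + \pdi(\Phi(y),\Phi(z))$; translating back through \Eqn{dext} yields $\pdiext(x,z) = \pdiext(x,y) + \pdiext(y,z)$. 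The only thing to check is that $\phi_r$ is indeed monotone, which is clear from its definition via the contiguous partition of $[N]$ into segments of lengths $q_r(1),\dots,q_r(n)$.

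For the projection property, fix a dimension $r$ and take $x,y \in [N]^d$ with $x_r = y_r$; then certainly $\Phi(x)_r = \phi_r(x_r) = \phi_r(y_r) = \Phi(y)_r$, so $\Phi(x)$ and $\Phi(y)$ agree in coordinate $r$. Let $x', y'$ be the projections of $x,y$ onto another $r$-hyperplane of $[N]^d$, i.e.\ $x'_r = y'_r$ and $x'_j = x_j$, $y'_j = y_j$ for $j \ne r$. Then $\Phi(x')$ and $\Phi(y')$ agree in coordinate $r$, and they agree with $\Phi(x),\Phi(y)$ respectively in all coordinates $j \ne r$; so $\Phi(x'),\Phi(y')$ are projections of $\Phi(x),\Phi(y)$ onto an $r$-hyperplane of $[n]^d$ (possibly the same hyperplane, if $\phi_r$ sends both levels to the same index — but then both sides of the claimed equalities collapse consistently and there is nothing to prove). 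Applying \Lem{dist-prop}(3) to $\Phi(x),\Phi(y)$ on $[n]^d$ gives $\pdi(\Phi(x),\Phi(y)) = \pdi(\Phi(x'),\Phi(y'))$ and $\pdi(\Phi(x),\Phi(x')) = \pdi(\Phi(y),\Phi(y'))$; unwinding \Eqn{dext} gives $\pdiext(x,y) = \pdiext(x',y')$ and $\pdiext(x,x') = \pdiext(y,y')$, as required.

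The main (and only) subtlety is bookkeeping: one must confirm that $\Phi$ is coordinatewise and that each $\phi_r$ is monotone nondecreasing, and handle the degenerate case in the projection property where two distinct $[N]$-levels map to the same $[n]$-level (so the ``two hyperplanes'' in the bloated grid project to a single hyperplane downstairs); in that case the projection identities for $\pdi$ become trivialities ($\pdi(z,z)=0$ on the relevant pairs) and transfer back without issue. Everything else is a direct push-through of \Eqn{dext}, so no genuine obstacle arises beyond being careful that the monotone structure of the partition is what makes \Lem{dist-prop}(2) transfer.
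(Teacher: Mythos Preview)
Your proof is correct. The paper itself does not supply a proof of this lemma --- it is stated and attributed to \cite{paper} without argument --- so there is nothing in the paper to compare against directly. That said, your approach is exactly the natural one and almost certainly what the cited source does: since $\pdiext$ is defined as $\pdi$ precomposed with $\Phi \times \Phi$, each of the three properties transfers by checking that the hypotheses on $x,y,z \in [N]^d$ push forward to the corresponding hypotheses on $\Phi(x),\Phi(y),\Phi(z) \in [n]^d$. The only nontrivial observation, which you identify, is that linearity requires the coordinate maps $\phi_r$ to be monotone (which follows from the contiguous-segment construction), and that the projection property needs the degenerate case $\phi_r(x_r) = \phi_r(x'_r)$ handled separately --- your treatment of both is fine.
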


Let $g:[n]^d\mapsto \R$ be the function in $\cP$ that is closest to $f$ and $\ef:[N]^d\mapsto \R$ be the extension of $g$ over $[N]^d$. Also, let $\efg:[N]^d\mapsto \R$ be the function closest to $\fext$ in $\cP$. \Thm{bhg} shows that the reduction preserves the $L_p$-distance.

\begin{theorem}\label{thm:bhg}
$(\Exp[|f-g|^p)])^{\frac{1}{p}} =(\Exp[|\fext-\ef|^p)])^{\frac{1}{p}}=(\Exp[|\fext-\efg|^p])^{\frac{1}{p}}$.
\end{theorem}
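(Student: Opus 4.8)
The plan is to reduce the statement to showing the three $p$-th powers coincide, namely
\[
\Exp_{\cD}[|f-g|^p] \;=\; \Exp_{\cU}[|\fext-\ef|^p] \;\ge\; \Exp_{\cU}[|\fext-\efg|^p] \;\ge\; \Exp_{\cD}[|f-g|^p],
\]
and then take $p$-th roots (recall $p\ge 1$, so $t\mapsto |t|^p$ is monotone and convex on $[0,\infty)$). Here expectations are taken w.r.t. $\cD$ on $[n]^d$ and $\cU$ on the bloated grid $[N]^d$; we may assume each marginal $\cD_r$ has full support, so that every cuboid $\Phi^{-1}(x)$ is nonempty (a zero-mass point of $[n]^d$ can be assigned any value consistent with the inequalities defining $\cP(\B)$). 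The first (equality) is the easy direction: $\fext=f\circ\Phi$ and $\ef=g\circ\Phi$ are constant on each cuboid $\Phi^{-1}(x)$, where $|\fext-\ef|\equiv|f(x)-g(x)|$, and by \Eqn{obs} (equivalently \Clm{XtoZ} with $X=\{x\}$) that cuboid has uniform mass $\mu_\cD(x)$; summing over $x\in[n]^d$ gives $\Exp_{\cU}[|\fext-\ef|^p]=\Exp_{\cD}[|f-g|^p]$. For the first inequality I verify that $\ef$ is a legal competitor for $\efg$, i.e. $\ef\in\cP$ on $[N]^d$: since $g\in\cP(\B)$, \Lem{dist} gives $g(a)-g(b)\le\pdi(a,b)$ for all $a,b$, hence $\ef(u)-\ef(v)=g(\Phi u)-g(\Phi v)\le\pdi(\Phi u,\Phi v)=\pdiext(u,v)$ for all $u,v\in[N]^d$, which is exactly the membership characterization of $\cP$ on $[N]^d$ (the analogue of \Lem{dist} for $\pdiext$, legitimate by \Lem{distconsistency}). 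As $\efg$ is a closest function to $\fext$ in $\cP$ on $[N]^d$, it follows that $\Exp_{\cU}[|\fext-\efg|^p]\le\Exp_{\cU}[|\fext-\ef|^p]$.

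The real content is the last inequality, which runs the ``wrong way'': from the optimum $\efg$ on the large grid we must manufacture a competitor for $g$ on the small grid without increasing the distance. The point is that $\efg$ need not be constant on cuboids, but $\fext$ is, so within each cuboid I pick the point where $\efg$ is closest to that constant value. Formally, let $\psi:[n]^d\to[N]^d$ be a section of $\Phi$ (so $\Phi\circ\psi=\mathrm{id}$) defined by choosing $\psi(x)$ to be any point of $\Phi^{-1}(x)$ that minimizes $z\mapsto|f(x)-\efg(z)|$, and set $h':=\efg\circ\psi:[n]^d\to\R$. Then $h'\in\cP(\B)$, since for all $x,y$,
\[
h'(x)-h'(y)=\efg(\psi x)-\efg(\psi y)\le\pdiext(\psi x,\psi y)=\pdi(\Phi\psi x,\Phi\psi y)=\pdi(x,y),
\]
so \Lem{dist} applies; note that the only property of $\psi$ used is $\Phi\circ\psi=\mathrm{id}$, which pulls $\pdiext$ back to $\pdi$. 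For the distance, using $\fext\equiv f(x)$ on $\Phi^{-1}(x)$ together with ``min $\le$ average'',
\[
|f(x)-h'(x)|^p=\min_{z\in\Phi^{-1}(x)}|\fext(z)-\efg(z)|^p\;\le\;\frac{1}{|\Phi^{-1}(x)|}\sum_{z\in\Phi^{-1}(x)}|\fext(z)-\efg(z)|^p,
\]
and multiplying by $\mu_\cD(x)=|\Phi^{-1}(x)|/N^d$ (by \Eqn{obs}) and summing over $x\in[n]^d$, using that the cuboids partition $[N]^d$, yields $\Exp_{\cD}[|f-h'|^p]\le\Exp_{\cU}[|\fext-\efg|^p]$. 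Since $g$ is a closest function to $f$ in $\cP(\B)$ and $h'\in\cP(\B)$, we conclude $\Exp_{\cD}[|f-g|^p]\le\Exp_{\cD}[|f-h'|^p]\le\Exp_{\cU}[|\fext-\efg|^p]$, which closes the cycle, and the theorem follows.

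I expect this last step to be the main obstacle. The insight is that the right way to ``push $\efg$ down'' to $[n]^d$ is not to average it over each cuboid, but to select, cuboid by cuboid, the argmin against the locally constant $\fext$: this single choice simultaneously keeps $h'$ inside $\cP(\B)$ (because any section of $\Phi$ pulls $\pdiext$ back to $\pdi$, by \Lem{dist}) and lets ``min $\le$ average'' do all the work on the distance side. Everything else is bookkeeping with \Eqn{obs} and \Lem{dist}. (A symmetrization argument over the group of cuboid-preserving permutations of $[N]^d$, combined with convexity of $\cP$ and of $|\cdot|^p$, also proves the last inequality, but is heavier.)
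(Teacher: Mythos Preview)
Your proof is correct, but you work harder than necessary in the last step because you miss the key observation the paper exploits: any $\efg\in\cP$ on $[N]^d$ is automatically \emph{constant on each cuboid} $\Phi^{-1}(x)$. Indeed, if $\Phi(v_1)=\Phi(v_2)$ then $\pdiext(v_1,v_2)=\pdi(\Phi v_1,\Phi v_2)=0$ and symmetrically $\pdiext(v_2,v_1)=0$, so $\efg(v_1)-\efg(v_2)\le 0$ and $\efg(v_2)-\efg(v_1)\le 0$. Once this is noted, the function $g'(x):=\efg(v)$ for any $v\in\Phi^{-1}(x)$ is well-defined, lies in $\cP(\B)$ (by the same section argument you give), and satisfies $\Exp_\cD[|f-g'|^p]=\Exp_\cU[|\fext-\efg|^p]$ \emph{exactly}, by the identical computation you used for the first equality. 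The cycle of inequalities then closes with no need for ``min $\le$ average''.

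Your argmin-section technique is a legitimate alternative: by choosing $\psi(x)\in\Phi^{-1}(x)$ to minimize $|f(x)-\efg(\cdot)|$ and invoking min $\le$ average, you get the needed inequality without ever using that $\efg$ is cuboid-constant. This is more robust (it would survive a definition of $\pdiext$ under which cuboids are not $\pdiext$-null), but here it is overkill; in fact your ``min'' equals every term in the average, so your inequality is secretly an equality. The paper's route is shorter and makes clear why the reduction loses nothing.
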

\begin{proof}
Note that the first equality in the theorem because of the following.
\begin{eqnarray}\label{eq:1st}\Exp[|f-g|^p]=\sum_{x\in [n]^d}\mu_{\cD}(x)|f(x)-g(x)|^p=\sum_{x\in [n]^d}\sum_{v\in [N]^d:\Phi(v)=x}\frac{|\fext(v)-\ef(v)|^p}{N^d}=\Exp[|\fext-\ef|^p]\end{eqnarray}
\noindent
Consider two points $v_1,v_2\in [N]^d$ such that $\Phi(v_1)=\Phi(v_2)$. Since $\efg\in \cP$, we have $\efg(v_1)-\efg(v_2)\le \pdiext(v_1,v_2)=\pdi(\Phi(v_1),\pdi(\Phi(v_2)))=0$. Similarly, $\efg(v_2)-\efg(v_1)\le \pdi(\Phi(v_2),\Phi(v_1))=0$.
Therefore $\efg(v_1)=\efg(v_2)$. Therefore, for every $x\in [n]^d$, $\efg$ takes the same value for all points $\{v\in [N]^d:\Phi(v)=x\}$. 

Consider the function $g':[n]^d\mapsto \R$ defined as $g'(\Phi(v))=\efg(v)$. Note that by the chain of equalities similar to \Eqn{1st}, it can be shown that $\Exp[|f-g'|^p]=\Exp[|\fext-\efg|^p]$. Note that $\Exp[|f-g|^p]\le \Exp[|f-g'|^p]$ and $\Exp[|\fext-\efg|^p]\le \Exp[|\fext-\efg|^p]$.
This and \Eqn{1st} yield that $\Exp[|\fext-\ef|^p]=\Exp[|\fext-\efg|^p]$.
\end{proof}

In the subsequent sections, we will talk about the testers in the uniformly distributed hypergrid domain. All the results can be extended to the known product distribution using bloated hypergrid domain arguments.

\section{$L_1$-distance and maximum weight matching}
Let $f:\bD\mapsto \bR$ be a function on discrete domain $\cD$ with induced quasimetric $d$ induced by the bounded derivative property $\cP$.  A pair $(x,y)$ is not violated iff $f(x)-f(y)\le \pdi(x,y)$. The violation score of $(x,y)$, defined as $vs_f(x,y)=\max\{f(x)-f(y)-\pdi(x,y), f(y)-f(x)-\pdi(y,x)\}$. An edge is violated iff $vs_f(x,y)>0$. The violation graph $G_f=(\bD, E_f)$ of $f$ is a graph on $\bD$ such that $(x,y)\in E_f$ iff $vs_f(x,y)>0$. The weight of each edge $(x,y)\in E_f$ is $vs_f(x,y)$. The following lemma relates the maximum weight matching in $G_f$ to $L_1$ distance of $f$ from $\cP$. We note that \Lem{l1-mwm} generalizes Lemma 3.1 in~\cite{BeRaYa14} which was proved for Lipschitz property.  

We need the following observation about the violation score in further discussions.
\begin{claim}\label{clm:vs+}
If $f(x)-f(y)-\pdi(x,y)>0$ for some pair $(x,y)$, then $vs_f(x,y)=f(x)-f(y)-\pdi(x,y)$
\end{claim}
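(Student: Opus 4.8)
The statement to prove is \Clm{vs+}: if $f(x)-f(y)-\pdi(x,y)>0$ for some pair $(x,y)$, then $vs_f(x,y)=f(x)-f(y)-\pdi(x,y)$. Recall that by definition $vs_f(x,y)=\max\{f(x)-f(y)-\pdi(x,y),\ f(y)-f(x)-\pdi(y,x)\}$. So the first term inside the $\max$ is exactly the quantity hypothesized to be positive; it therefore suffices to show that the second term, $f(y)-f(x)-\pdi(y,x)$, is dominated by the first term — indeed, it suffices to show the second term is $\le 0$, since then the positive first term is clearly the maximum.

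\textbf{Key step.} The plan is to use the triangle inequality from \Lem{dist-prop}, applied to the triple $(x,y,x)$: $\pdi(x,x)\le \pdi(x,y)+\pdi(y,x)$. Since $\pdi(x,x)=0$ (the shortest path from a vertex to itself has weight $0$; note one must check that no negative-weight cycle exists, but this follows because $f\in\cP(\B)$ is nonempty, or directly from the shortest-path characterization in \Def{dist} together with \Eqn{supergeneralLip}), we obtain $\pdi(x,y)+\pdi(y,x)\ge 0$, i.e.\ $\pdi(y,x)\ge -\pdi(x,y)$. Therefore
\[
f(y)-f(x)-\pdi(y,x)\ \le\ f(y)-f(x)+\pdi(x,y)\ =\ -\big(f(x)-f(y)-\pdi(x,y)\big)\ <\ 0,
\]
where the last inequality is precisely the hypothesis. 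Consequently the first term of the $\max$ is positive while the second is negative, so $vs_f(x,y)=f(x)-f(y)-\pdi(x,y)$, as claimed.

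\textbf{Main obstacle.} The only subtlety is justifying $\pdi(x,x)=0$ (equivalently $\pdi(x,y)+\pdi(y,x)\ge 0$ for all $x,y$), which rules out the degenerate possibility that both terms of the $\max$ are simultaneously positive. This is not literally one of the three bulleted properties in \Lem{dist-prop}, but it is immediate from \Def{dist}: a shortest path from $x$ to $x$ can be taken to be the empty path of weight $0$, and it cannot be negative unless the weighted digraph has a negative cycle — which would contradict the existence of any $f\in\cP(\B)$, since \Lem{dist} would then force $f(x)-f(x)\le\pdi(x,x)<0$. (Alternatively, plugging $y=x$ into the explicit formula \Eqn{supergeneralLip} gives $\pdi(x,x)=0$ directly.) Everything else is a one-line manipulation, so this claim is essentially a bookkeeping lemma establishing that the $\max$ in the definition of $vs_f$ never has two competing positive branches.
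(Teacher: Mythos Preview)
Your proof is correct and follows essentially the same approach as the paper: both arguments show that the second term $f(y)-f(x)-\pdi(y,x)$ is negative by invoking $-\pdi(x,y)\le\pdi(y,x)$ (equivalently $\pdi(x,y)+\pdi(y,x)\ge 0$). The paper simply cites an unspecified lemma for this inequality, whereas you supply the justification via the triangle inequality applied to $(x,y,x)$ together with $\pdi(x,x)=0$; your extra care in handling this point is appropriate.
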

\begin{proof}
We have $f(x)-f(y)>\pdi(x,y)$ which implies that $f(y)-f(x)<-\pdi(x,y)<\pdi(y,x)$ from lemma[to be filled]. Therefore $vs_f(x,y)=\max\{f(x)-f(y)-\pdi(x,y), f(y)-f(x)-\pdi(y,x)\}=f(x)-f(y)-\pdi(x,y)$.
\end{proof}
 
\begin{lemma}\label{lem:l1-mwm}
Let $M$ be the maximum weight matching in $G_f$. Let $vs_f(M)=\sum_{(x,y)\in E_f}vs_f(x,y)$. Then $L_1(f,\cP)=vs_f(M)$
\end{lemma}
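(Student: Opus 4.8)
The claim $L_1(f,\cP) = vs_f(M)$ where $M$ is a maximum weight matching in $G_f$ naturally splits into two inequalities, and I would prove each direction separately, following the template of Lemma 3.1 in \cite{BeRaYa14} but replacing the Lipschitz-specific arithmetic with the quasimetric $\pdi$ and the violation score $vs_f$.

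\textbf{Lower bound $L_1(f,\cP) \ge vs_f(M)$.} Here I would argue that any $g \in \cP$ must pay, on the matched pairs, at least the total violation score. Fix the maximum weight matching $M$ and any $g \in \cP$. For a matched pair $(x,y) \in M$ with, say, $f(x) - f(y) - \pdi(x,y) > 0$ (so by \Clm{vs+} this quantity equals $vs_f(x,y)$), I want to show $|f(x) - g(x)| + |f(y) - g(y)| \ge vs_f(x,y)$. Since $g \in \cP$, \Lem{dist} gives $g(x) - g(y) \le \pdi(x,y)$. Then
\[
|f(x)-g(x)| + |f(y)-g(y)| \;\ge\; (f(x)-g(x)) - (f(y)-g(y)) \;=\; (f(x)-f(y)) - (g(x)-g(y)) \;\ge\; (f(x)-f(y)) - \pdi(x,y) \;=\; vs_f(x,y),
\]
where the first step uses $|a| \ge a$ and $|b| \ge -b$. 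Summing over the disjoint pairs of $M$ and noting that $\Exp[|f-g|] = \frac{1}{n^d}\sum_x |f(x)-g(x)| \ge \frac{1}{n^d}\sum_{(x,y)\in M}(|f(x)-g(x)|+|f(y)-g(y)|) \ge \frac{1}{n^d} vs_f(M)$ (I will have to be careful about the normalization convention in \Def{lp}, possibly the statement intends unnormalized sums — I'd state it so the normalizations match), and taking the infimum over $g$, gives $L_1(f,\cP) \ge vs_f(M)$.

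\textbf{Upper bound $L_1(f,\cP) \le vs_f(M)$.} This is the harder direction: I must exhibit a $g \in \cP$ with $\Exp[|f-g|] \le vs_f(M)$. The natural construction, generalizing \cite{BeRaYa14}, is to repair $f$ by pushing each endpoint of a matched violated pair halfway (or by the appropriate asymmetric amount dictated by $\pdi$) toward feasibility, leaving all unmatched points untouched. Concretely: for each $(x,y)\in M$ with $f(x)-f(y) > \pdi(x,y)$, decrease $f(x)$ and increase $f(y)$ by total amount $vs_f(x,y)$, split so that the new values satisfy $g(x) - g(y) = \pdi(x,y)$; all other points keep $g = f$. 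Then $\Exp[|f-g|] = \frac{1}{n^d}\sum_{(x,y)\in M} vs_f(x,y) = \frac1{n^d}vs_f(M)$ by construction. The crux is verifying $g \in \cP$, i.e. $g(u) - g(v) \le \pdi(u,v)$ for \emph{all} pairs $u,v$ — not just the ones we touched. This is where maximality of $M$ must be used: if some pair $(u,v)$ were violated for $g$, one shows it can be spliced into $M$ to increase the matching weight (using that at most one of $u,v$ lies in $M$, or handling the case both do via a swapping/augmenting argument and the triangle inequality of \Lem{dist-prop}), contradicting optimality. I expect the main obstacle to be exactly this feasibility check: controlling the interaction between different matched pairs and any untouched point, and formalizing the "augment the matching" contradiction when a residual violation exists. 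I would handle it by a careful case analysis on how many of $\{u,v\}$ are matched, invoking the triangle inequality to bound $vs_f$ on the new pair in terms of the pairs it would displace, and arguing that the perturbation is monotone enough that no new violations of large score are created.

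Finally I would remark that when $\cP$ is the $c$-Lipschitz property, $\pdi(x,y) = c\|x-y\|_1$ and $vs_f(x,y) = |f(x)-f(y)| - c\|x-y\|_1$, so the statement reduces exactly to Lemma 3.1 of \cite{BeRaYa14}, which is the consistency check that the generalization is correct.
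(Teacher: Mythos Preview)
Your lower bound argument ($L_1(f,\cP)\ge vs_f(M)$) matches the paper's and is correct.

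For the upper bound you take a genuinely different route from the paper, and as written it has a gap. You propose to \emph{construct} $g\in\cP$ from $M$ by moving the two endpoints of each matched pair toward one another by a total of $vs_f(x,y)$. But any nonnegative split $a+b=vs_f(x,y)$ (decrease $f(x)$ by $a$, increase $f(y)$ by $b$, where $f(x)-f(y)>\pdi(x,y)$) already yields $g(x)-g(y)=\pdi(x,y)$, so the condition you impose does not determine the split --- and the natural symmetric choice can fail. For instance, with the $1$-Lipschitz property on $[3]$ and $f(1)=0,\ f(2)=10,\ f(3)=4$, the maximum weight matching is $\{(1,2)\}$ with score $9$; the symmetric fix gives $g=(4.5,\,5.5,\,4)$, which still violates at $(2,3)$. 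A correct split does exist here ($g=(4,5,4)$), but your augmenting argument (swap a residual violation into $M$ to raise its weight) compares $vs_f$ on candidate pairs, not $vs_g$, and does not tell you which split to take; pushing it through would require a global, not pair-local, choice of $g$ and considerably more than the case analysis you sketch.

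The paper (following \cite{BeRaYa14}) sidesteps this entirely by running the argument in the opposite direction. It fixes an \emph{optimal} $g\in\cP$, partitions the domain into $\bD_>=\{x:f(x)>g(x)\}$, $\bD_==\{x:f(x)=g(x)\}$, $\bD_<=\{x:f(x)<g(x)\}$, and builds a bipartite graph $B_f$ on $(\bD_>\cup\bD_=)\times(\bD_<\cup\bD_=)$ whose edges are the \emph{tight} pairs with $g(x)-g(y)=\pdi(x,y)$. A Hall-type statement (\Lem{hall}) produces a matching $M'$ in $B_f$ saturating $\bD_>\cup\bD_<$; on a tight pair one has exactly $vs_f(x,y)=|f(x)-g(x)|+|f(y)-g(y)|$, so $vs_f(M')=L_1(f,g)=L_1(f,\cP)$. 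Shortcutting through any vertex of $\bD_=$ (its two incident tight edges collapse to one with the same total score, via linearity of $\pdi$) converts $M'$ into a matching in $G_f$ of the same weight, whence $vs_f(M)\ge L_1(f,\cP)$. So the template you meant to follow from \cite{BeRaYa14} is in fact this ``build a matching from the optimal $g$'' argument, not a ``repair $f$ from $M$'' construction.
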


\begin{proof}
First we prove that $L_1(f,\cP)\ge vs_f(M)$. Let $g:\bD\mapsto \bR$ be the closest function to $f$ in $\cP$, that is, $L_1(f, \cP)=L_1(f,g)$. Consider a violated edge $(x,y)\in G_f$. W.l.o.g., assume that $vs_f(x,y)=f(x)-f(y)-\pdi(x,y)$. Note that $f(x)-f(y)> \pdi(x,y)$ and $g(x)-g(y)\le \pdi(x,y)$. Therefore we have $|f(x)-g(x)|_1+|f(y)-g(y)|_1\ge (f(x)-g(x))-(f(y)-g(y))\ge f(x)-f(y)-\pdi(x,y)=vs_f(x,y)$. Moreover, since $M$ is a matching, each vertex appears in at most one pair in $M$. Therefore $L_1(f,g)\ge \sum_{(x,y)\in G_f}vs_f(x,y)$.

Now we prove that $L_1(f,\cP)\le vs_f(M)$. The argument in this part is closely related to the proof of Lemma 3.1 in~\cite{BeRaYa14}. Consider the partition of vertex set $\bD=\{\bD_>,\bD_=,\bD_<\}$. where $\bD_{op}=\{x\in \bD| f(x)~op~g(x)\}$ where $op=\{>,=,<\}$. Consider the bipartite graph $B_f=(\{\bD_>\cup \bD_=\}\times \{\bD_<\cup \bD_=\}, E_B)$. The edges $E_B$ consist of pairs $(x,y)\in \{\bD_>\cup \bD_=\}\times \{\bD_<\cup \bD_=\}$ such that $g(x)-g(y)=\pdi(x,y)$ and $x,y$ map to different vertices in $G_f$.

Note that $f(x)\ge g(x)\ge g(y)\ge f(y)$. Since $f(x)-f(y)>g(x)-g(y)=\pdi(x,y)$, we have $vs_f(x,y)=f(x)-f(y)-\pdi(x,y)$ by \Clm{vs+}. Thus we have 
$$vs_f(x,y)= f(x)-f(y)-\pdi(x,y)=f(x)-f(y)-(g(x)-g(y))=|f(x)-g(x)|+|f(y)-g(y)|$$

\Lem{hall} shows that $B_f$ contains a matching $M'$ which matches every vertex $x\in \bD_<\cup \bD_>$. We say that $M'\ni x$ if $x$ is matched in $M'$. Thus, we infer that 
$$vs_f(M')=\sum_{(x,y)\in M'}vs_f(x,y)= \sum_{M'\ni x}|f(x)-g(x)|=\sum_{x\in \bD_<\cup \bD_>}|f(x)-g(x)|=L_1(f,\cP)$$

Now we show that there exists a matching $M\in G_f$ such that $vs_f(M)\ge vs_f(M')$. Consider an edge $(x,y)\in B_f$. If $x\in \bD_>$ and $y\in \bD_<$, then we add $(x,y)$ to $M$ in $B_f$. Consider a vertex $x\in \bD_=$ in $B_f$. Let the edges incident to $x$ in $B_f$ be $(y,x)$ and $(x,z)$. Note that by \Lem{hall}, both the edges exist. Also, by the definition of $B_f$, we have $f(y)>f(x)>f(z)$. Now we have 
$$vs_f(y,x)+vs_f(x,z)= (f(y)-g(y))+(g(x)-f(x))+(f(x)-g(x))+(g(z)-f(z))=|f(y)-g(y)|+|f(z)-g(z)|$$
Therefore $vs(y,z)=vs(y,x)+vs(x,z)$. Therefore $vs(M')=vs(M)$. 
\end{proof}

\begin{lemma}\label{lem:hall}
$B_f$ contains a matching $M'$ that matches every vertex in $\cD_<\cup \cD_>$
\end{lemma}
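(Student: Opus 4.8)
The plan is to verify the (defect form of the) Hall condition for $B_f$. Write $g$ for the function of $\cP$ closest to $f$, as in \Lem{l1-mwm}. I will first show that $B_f$ has a matching saturating every vertex of $\bD_>$ on its left side; by the symmetric argument applied to $-f$ (for which the roles of $\bD_>$ and $\bD_<$ are exchanged, $-g$ is the closest function in the reversed property, and $B_{-f}$ is exactly $B_f$ with its two parts swapped) $B_f$ also has a matching saturating every vertex of $\bD_<$ on its right side; and a bipartite graph that has a matching saturating a set of left vertices together with a matching saturating a set of right vertices has a single matching saturating the union (Mendelsohn--Dulmage; equivalently, a short max-flow argument). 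That common matching is the required $M'$.

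So fix $S \subseteq \bD_>$; the goal is $|N_{B_f}(S)| \ge |S|$. Let $W \supseteq S$ be the smallest vertex set \emph{closed under tight constraints}: if $u \in W$, $v \neq u$ and $g(u) - g(v) = \pdi(u,v)$, then $v \in W$. The two facts I would establish are: \textbf{(i)} $g' := g + \gamma \cdot \mathbf{1}_W \in \cP$ for every small enough $\gamma > 0$; and \textbf{(ii)} $W \cap (\bD_< \cup \bD_=) \subseteq N_{B_f}(S)$. For (i), use \Lem{dist}: $g'(u) - g'(v) \le \pdi(u,v)$ must be checked for all $u,v$. If $u,v$ lie both inside or both outside $W$ the difference is unchanged; if $u \notin W, v \in W$ the left side only decreases; and if $u \in W, v \notin W$ then by minimality of $W$ the inequality $g(u) - g(v) \le \pdi(u,v)$ is strict (pairs with $\pdi=+\infty$ impose no constraint), hence survives adding $\gamma$ as soon as $\gamma$ is below the positive minimum slack over the finitely many such pairs. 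For (ii), a vertex $v \in W \cap (\bD_< \cup \bD_=)$ is not in $S$, so it entered $W$ along a chain $u_0, u_1, \dots, u_k = v$ with $u_0 \in S$ and $g(u_j) - g(u_{j+1}) = \pdi(u_j, u_{j+1})$ for all $j$; summing these equalities and invoking the triangle inequality of $\pdi$ (\Lem{dist-prop}) together with $g \in \cP$ collapses the chain to the single tight constraint $g(u_0) - g(v) = \pdi(u_0,v)$, and since $u_0 \in \bD_>$, $v \in \bD_< \cup \bD_=$ and $u_0 \ne v$, the pair $(u_0, v)$ is an edge of $B_f$, i.e.\ $v \in N_{B_f}(S)$.

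Finally, optimality of $g$ closes the argument. For small $\gamma > 0$ the pointwise error $|f - g'|$ equals $|f-g|$ off $W$, decreases by $\gamma$ on $W \cap \bD_>$, and increases by $\gamma$ on $W \cap (\bD_= \cup \bD_<)$; since all points carry equal weight under the uniform distribution and $\|f - g'\|_1 \ge \|f-g\|_1$, this forces $|W \cap \bD_>| \le |W \cap (\bD_< \cup \bD_=)|$. With (ii) and $S \subseteq W \cap \bD_>$ we get $|S| \le |W \cap \bD_>| \le |W \cap (\bD_< \cup \bD_=)| \le |N_{B_f}(S)|$, which is Hall's condition. The step I expect to be the crux is (ii): the fact that a length-$k$ chain of tight constraints composes into one tight constraint, which is exactly where the triangle inequality of $\pdi$ and the membership $g \in \cP$ enter, and without which $W$ could spread far beyond $N_{B_f}(S)$ and the final count would fail. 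A secondary point, needed for the way \Lem{l1-mwm} uses this lemma, is to arrange that every $\bD_=$ vertex used by $M'$ is matched on both of its sides; this can be secured by taking $M'$ maximal among matchings saturating $\bD_< \cup \bD_>$, or by reading $M'$ off from a path decomposition of the union of the two one-sided matchings.
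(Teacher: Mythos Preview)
Your argument is correct. The paper itself does not prove \Lem{hall}---it only remarks that the proof is very similar to the one in~\cite{BeRaYa14} and omits it---and your Hall-condition-via-optimality-perturbation argument (followed by Mendelsohn--Dulmage to combine the two one-sided saturating matchings) is exactly the approach of~\cite{BeRaYa14}, carried over from the $c$-Lipschitz setting to general bounded-derivative properties through the quasimetric~$\pdi$.
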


The proof of this lemma is very similar to the one given in~\cite{BeRaYa14}, so we skip the details here.

\section{Dimension reduction from the grid to the line}\label{sec:dimred}
A natural approach for bounded derivative property testing over hypergrids is to perform
\emph{dimension reduction}~\cite{DGLRRS99,HK04,AC04,BeRaYa14, paper} to the line:
Can one argue that if $f$ is far from $\cP$ on the hypergrid,
then some (or many) restrictions $\frest{f}{\ell}$ to lines will also be far from $\cP$?

Consider the expected distance of $\frest{f}{\ell}$, where $\ell$ is a random axis parallel chosen from $\cL(n,d)$ which denotes the set of all axis-parallel lines in $[n]^d$. 
Our dimension reduction lemma is the following.
\begin{lemma}\label{lem:grid-to-line-gen} (Dimension reduction)
$ \sum_{\ell \in \cL(n,d)}[L_1(\frest{f}{\ell},\cP)] \geq L_1(f,\cP)/2$.
\end{lemma}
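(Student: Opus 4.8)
The plan is to reduce to the maximum-weight-matching characterization of $L_1$-distance (\Lem{l1-mwm}) and then to ``route'' each matched violated pair along axis-parallel lines, exploiting the linearity of the quasimetric $\pdi$.

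By \Lem{l1-mwm}, let $M$ be a maximum-weight matching in the violation graph $G_f$, so $L_1(f,\cP)=\sum_{(x,y)\in M}vs_f(x,y)$. Fix $(x,y)\in M$; by \Clm{vs+} (swapping $x$ and $y$ if needed) we may assume $vs_f(x,y)=f(x)-f(y)-\pdi(x,y)>0$. Connect $x$ to $y$ by a monotone lattice path $x=z_0,z_1,\dots,z_k=y$ in which consecutive points differ in a single coordinate and every coordinate changes monotonically from its $x$-value to its $y$-value (say, changing coordinates in increasing order of dimension). Each consecutive pair $(z_{j-1},z_j)$ lies on one axis-parallel line $\ell_j$, and $\pdi$ restricted to a pair of points on a line is exactly the quasimetric $\cP$ induces on that line (by \Eqn{supergeneralLip} it depends only on the $l_i,u_i$ of the line's dimension), so $\max\{vs_f(z_{j-1},z_j),0\}$ is the positive part of the violation score of $\frest{f}{\ell_j}$ at $(z_{j-1},z_j)$. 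Since each $z_j$ lies coordinatewise between $x$ and $y$, the Linearity property \Lem{dist-prop}(2) gives $\pdi(x,y)=\sum_{j}\pdi(z_{j-1},z_j)$, and telescoping $f(x)-f(y)=\sum_j\bigl(f(z_{j-1})-f(z_j)\bigr)$ yields
\[
vs_f(x,y)=\sum_{j=1}^{k}\Bigl(f(z_{j-1})-f(z_j)-\pdi(z_{j-1},z_j)\Bigr)\;\le\;\sum_{j=1}^{k}\max\{vs_f(z_{j-1},z_j),0\}.
\]

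Summing over $(x,y)\in M$ and regrouping the right-hand side by the line each step sits on gives $L_1(f,\cP)\le\sum_{\ell\in\cL(n,d)}w_\ell$, where $w_\ell$ is the total positive step-violation deposited on $\ell$. The remaining task is to show $w_\ell\le 2\,L_1(\frest{f}{\ell},\cP)$ for every line $\ell$: each routed step on $\ell$ is a weighted edge of $G_{\frest{f}{\ell}}$, so by \Lem{l1-mwm} applied to $\frest{f}{\ell}$ it suffices to check that these steps can be split into two matchings of $G_{\frest{f}{\ell}}$ (then keeping the heavier half costs only a factor $2$). I would obtain this by choosing the routing so that, since $M$ is a matching of $[n]^d$, each vertex of a line is hit by at most two routed steps; the routed steps on that line then form a multigraph of maximum degree two and hence decompose into two matchings.

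I expect this last step to be the main obstacle. With the naive increasing-order routing, many pairs of $M$ can send steps through the same vertex of a line (for instance, pairs whose endpoints agree on a common prefix and a common suffix of the coordinates), so the degree-two property does not hold as stated; making it hold requires a more careful — possibly pair-dependent or randomized — choice of the order in which coordinates are changed, together with a charging argument that converts the vertex-disjointness of $M$ on $[n]^d$ into the needed degree bound on each line. Everything else (the matching characterization, the telescoping via \Lem{dist-prop}, and the regrouping by lines) is routine given \Lem{l1-mwm}.
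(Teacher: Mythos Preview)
Your reduction to \Lem{l1-mwm} and the telescoping via linearity of $\pdi$ are fine; the gap is exactly where you flag it, and it is a real one, not a technicality.  With the canonical ``increasing-order'' routing you propose, the degree on a line can be unbounded: for any $i$-line $\ell$ determined by $(y_1,\dots,y_{i-1},x_{i+1},\dots,x_d)$, every matched pair $(x,y)\in M$ that agrees on this prefix/suffix deposits its $i$-step on $\ell$, and all of these steps share the endpoint $(y_1,\dots,y_{i-1},x_i,x_{i+1},\dots,x_d)$ whenever the $x_i$'s coincide.  Vertex-disjointness of $M$ in $[n]^d$ gives you nothing here, since these pairs are disjoint in the coordinates you have \emph{already changed} or \emph{not yet changed}, which are precisely the coordinates that have been collapsed when you restrict to $\ell$.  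A pair-dependent or randomized choice of dimension order does not obviously help either: the problem of routing a matching in $[n]^d$ through axis-parallel segments with bounded vertex-congestion on every line is a nontrivial combinatorial routing problem, and for weighted violations there is no evident averaging argument that converts low \emph{expected} congestion into the two-matching bound you need.  Indeed, the paper remarks that even for the special cases handled in \cite{BeRaYa14} it is unclear how to push a direct argument through for general bounded-derivative properties.

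The paper therefore avoids routing altogether.  It defines a sequence of matchings $M_0=M,\;M_1,\dots,M_d$, where $M_i$ is a maximum-weight matching in $G_f$ with no $j$-cross pairs for $j\le i$, so $vs_f(M_0)=L_1(f,\cP)$ and $vs_f(M_d)=0$.  The drop $vs_f(M_{i-1})-vs_f(M_i)$ is bounded by $2\sum_{\ell\in\cL_i}L_1(\frest{f}{\ell},\cP)$ via two ingredients: a stability estimate (\Clm{matchdiff}) saying that MWM weight changes by at most $L_1(f,g)$ when $f$ is replaced by $g$, applied with $g=f^{(i)}$ the function closest to $f$ having no violations along $i$-lines; and \Thm{no-viol}, which guarantees that $G_{f^{(i)}}$ admits an MWM with no $i$-cross pairs, so one can pass from $M_{i-1}$ to a cross-free matching without leaving the MWM value of $f^{(i)}$.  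Summing over $i$ telescopes to the lemma.  The key idea you are missing is this intermediate function $f^{(i)}$ together with \Thm{no-viol}; they replace the combinatorial routing step entirely.
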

We note here that the dimension reduction lemma with same guarantee has been proven in~\cite{BeRaYa14} for $c$-Lipschitz and monotonicity for boolean valued functions separately, but it is unclear to us how to extend it to the bounded derivative properties 
and  monotonicity of real valued functions, in particular when the bounding functions are different in each dimension.

%
%

\subsection{Dimension reduction and the alternating paths}\label{sec:cs}

We begin by proving \Lem{grid-to-line-gen} for the uniform distribution. This requires
some of the machinery of~\cite{ChSe13, paper}.
 Consider a maximum weight matching of minimum cardinality (henceforth called MWm)
$M$ of $G_f$. From \Lem{l1-mwm},  we get $vs_f(M) = \L_1(f,\cP)$.
An important theorem of~\cite{paper} related the size of $M$ to the number of special axis-aligned violated pairs.
This leads to an optimal property tester w.r.t the uniform distribution.

For a matching $M$ and coordinate $i$ we say that a pair $(x,y)\in M$ is an $i$-cross pair if $x_i\neq y_i$. 
%
%
%
%

\begin{theorem}[from~\cite{paper}]
\label{thm:no-viol}
Let $f$ be a function on $[n]^d$ and suppose there are no violations along the $i$-lines. That is, for any pair $(x,y)$ with $x_j = y_j$ for $j\neq i$ and $x_i < y_i$, we have $f(x)\leq f(y)$. Then there exists a MWM in $G_f$ with no $i$-cross pairs.
\end{theorem}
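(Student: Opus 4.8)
The plan is to prove \Thm{no-viol} by an exchange argument on matchings in the violation graph: starting from an arbitrary MWM $M$, I would show that any $i$-cross pair in $M$ can be eliminated (replaced by non-$i$-cross pairs) without decreasing the total violation score, while not increasing cardinality, so that iterating yields a MWM with no $i$-cross pairs. The key structural input is the hypothesis that there are no violations along $i$-lines, i.e. $\frest{f}{\ell}$ is monotone nondecreasing in the $i$-coordinate on every $i$-line $\ell$, together with the Projection property of $\pdi$ from \Lem{dist-prop}: if $x_i = y_i$ then moving both $x$ and $y$ to another $r$-hyperplane in the $i$-direction preserves $\pdi(x,y)$.

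First I would set up notation: write each point $x$ as $(x_i, \barx)$ where $\barx \in [n]^{d-1}$ records the non-$i$ coordinates. For a pair $(x,y) = ((a,\barx),(b,\bary))$ in $M$ with $a \ne b$, consider the ``projected'' pair obtained by replacing both $i$-coordinates with a common value. The heart of the argument is the following claim: if $(x,y)$ is a violated pair with $x_i \ne y_i$, then there is a value $c$ such that the pair $((c,\barx),(c,\bary))$ (or a short combination of such pairs) is also violated with violation score at least $vs_f(x,y)$. To see why, suppose WLOG the violation is $f(x) - f(y) > \pdi(x,y)$, so $vs_f(x,y) = f(x)-f(y)-\pdi(x,y)$ by \Clm{vs+}. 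Using monotonicity along $i$-lines, I can slide $x$ and $y$ within their $i$-lines: if $x_i < y_i$, replace $x$ by $x^+ = (y_i, \barx)$; monotonicity gives $f(x^+) \ge f(x)$, and the Projection property gives $\pdi(x^+, y) = \pdi(x,y)$ (now both lie in the $i$-hyperplane $\{i\text{-coord} = y_i\}$), so $f(x^+) - f(y) - \pdi(x^+,y) \ge vs_f(x,y) > 0$, i.e. $(x^+, y)$ is violated with at least the same score. The symmetric move (sliding $y$ down to $x_i$ when $x_i > y_i$) works the same way. So every $i$-cross violated pair can be ``straightened'' to a pair lying in a single $i$-hyperplane with no loss of violation score.

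The remaining issue is that straightening individual pairs of a matching need not produce a matching: two different $i$-cross pairs might straighten onto overlapping vertices. This is where I expect the main obstacle to be, and the standard way through it is an alternating-paths / augmenting-structure argument (the machinery alluded to from \cite{ChSe13, paper}). Concretely, take $M$ a MWM of minimum cardinality. Let $C \subseteq M$ be its $i$-cross pairs; form an auxiliary bipartite-like structure on the $i$-lines touched by $C$, tracking for each cross pair its two endpoints' $\barx$-projections. Using the straightening operation above together with the monotonicity of each $\frest{f}{\ell}$, I would argue that the multiset of projected pairs can be reassembled into a matching $M'$ on $[n]^d$ of the same total weight and no larger cardinality, with strictly fewer $i$-cross pairs — e.g. by decomposing the symmetric difference into alternating paths and cycles and rerouting along them, which is exactly the technique used for the analogous statement in \cite{paper}. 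Minimality of cardinality is what forbids certain degenerate reassemblies and forces the rerouting to terminate.

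Iterating the above on the remaining $i$-cross pairs, each step strictly decreasing their number while keeping $vs_f(\cdot) = L_1(f,\cP)$ and not increasing cardinality, terminates in a MWM with no $i$-cross pairs, proving the theorem. The main delicate point is the bookkeeping in the reassembly step — ensuring that the straightened pairs are still \emph{violated} (the violation-score inequality above handles this pointwise) and that the global matching structure is preserved (the alternating-path rerouting, using minimum cardinality, handles this); by contrast, the ``one pair at a time'' slide is a direct consequence of $i$-line monotonicity plus the Projection property and should be routine.
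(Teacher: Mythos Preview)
The paper does not prove \Thm{no-viol}; it is quoted verbatim as a result from \cite{paper} and then used as a black box in the proof of \Lem{mi}. There is therefore no in-paper proof to compare your proposal against.

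That said, on the substance of your sketch: the overall strategy --- straighten each $i$-cross pair to a non-$i$-cross pair without losing violation score, then rewire to repair the matching --- is indeed the shape of the argument in the cited literature. However, one step in your write-up is incorrect as stated. The Projection property of \Lem{dist-prop} does \emph{not} give $\pdi(x^+,y)=\pdi(x,y)$: Projection requires the two points to \emph{already} share their $r$-coordinate and then moves \emph{both}; you are moving only $x$. In general $\pdi(x^+,y)\neq\pdi(x,y)$ --- for the $c$-Lipschitz property, for instance, $\pdi(x^+,y)=c\|x^+-y\|_1$ is strictly smaller than $\pdi(x,y)$. The correct justification uses Linearity: since $x^+$ agrees with $y$ in coordinate $i$ and with $x$ elsewhere, it lies coordinatewise between $x$ and $y$, so $\pdi(x,y)=\pdi(x,x^+)+\pdi(x^+,y)$; then the hypothesis ``no violations along $i$-lines'' (which in the general bounded-derivative setting reads $f(x)-f(x^+)\le\pdi(x,x^+)$, not merely $f(x)\le f(x^+)$) gives
\[
f(x^+)-f(y)-\pdi(x^+,y)\ \ge\ f(x)-\pdi(x,x^+)-f(y)-\pdi(x^+,y)\ =\ f(x)-f(y)-\pdi(x,y).
\]
With that fix the single-pair slide is sound. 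Your handling of the global reassembly is only a gesture toward ``alternating paths''; the actual argument in \cite{paper} resolves collisions by rewiring conflicting cross-pairs two at a time, which is where the minimum-cardinality assumption is genuinely used.
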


We define a sequence of matchings starting with the MWM $M$. The $i$th matching $M_i$ is also an MWM but it is not allowed any $j$-cross pairs for $1\leq j\leq i$. Our main claim (\Lem{mi}) proves that $L_1(\frest{f}{\ell}),\cP)$ is an upper bound on the drop in the violation scores of matchings, which gives the dimension reduction lemma.
This claim is proved using \Thm{no-viol} and uses a simple but useful structural statement about MWMs (\Clm{matchdiff}). We dive into the details now, starting with some definitions.
\begin{asparaitem}
	
\item Define $M_i$ to be a MWM in $G_f$
that does not contain any $j$-cross pairs for $1\leq j\leq i$.
Observe that $vs_f(M_0) = L_1(f,\cP)$ and $vs_f(M_d) = 0$. 
	\item {\em Hypergrid slices}: Given an $i$-dimensional vector $\a\in [n]^i$, let $\slice_\a := \{x\in [n]^d: x_j = \a_j, ~1\leq j\leq i\}$ be the {\em slice} of the hypergrid with respect to vector $\a$, or simply the $\a$-slice. Each slice is in itself an $[n]^{d-i}$ hypergrid.
The hypergrid $[n]^d$ is partitioned by the various $\a$-slices of the same dimension. That is, 
$[n]^d = \bigcup_{\a\in [n]^i} \slice_\a$, and $\slice_\a \cap \slice_{\a'} = \emptyset \textrm{ whenever }  \a\neq\a'$.
Given a function $f$ defined on the hypergrid, let $\frest{f}{\a}$ denote the function restricted to the slice 
$\slice_\a$.
\end{asparaitem}
\medskip
%
\begin{claim}\label{clm:matchdiff}
Let $f$ and $g$ be two real valued functions defined on a hypergrid. Let $M$ and $N$ be the MWMs w.r.t $f$ and $g$ on the respective violation graphs. Then, 
$|vs_f(M)-vs_f(N)| \leq L_1(f,g)$.
\end{claim}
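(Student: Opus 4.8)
The plan is to bound the difference in violation scores by exhibiting, for each of the two matchings, a competitor matching in the \emph{other} violation graph whose weight cannot drop by more than $L_1(f,g)$. Concretely, I will first show $vs_f(M) - vs_f(N) \le L_1(f,g)$ and then, by the symmetric argument with the roles of $f$ and $g$ swapped, obtain $vs_g(N) - vs_g(M) \le L_1(f,g)$; combining the two (after relating $vs_f(N)$ to $vs_g(N)$, see below) yields the two-sided bound on $|vs_f(M) - vs_f(N)|$.

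For the direction $vs_f(M) - vs_f(N) \le L_1(f,g)$: start from $M$, the MWM for $f$. The key pointwise estimate is that for any pair $(x,y)$,
\[
vs_f(x,y) \;\le\; vs_g(x,y) + |f(x)-g(x)| + |f(y)-g(y)|,
\]
which follows directly from the definition $vs_f(x,y)=\max\{f(x)-f(y)-\pdi(x,y),\, f(y)-f(x)-\pdi(y,x)\}$: whichever of the two terms achieves the max for $f$, replacing $f$ by $g$ inside it changes its value by at most $|f(x)-g(x)|+|f(y)-g(y)|$, and the max for $g$ is at least that replaced term. Summing this over the pairs of $M$ and using that $M$ is a matching (so each vertex $x$ contributes its $|f(x)-g(x)|$ at most once) gives
\[
vs_f(M) \;\le\; vs_g(M) + \sum_{x} |f(x)-g(x)| \;=\; vs_g(M) + L_1(f,g) \;\le\; vs_g(N) + L_1(f,g),
\]
where the last inequality is because $N$ is a \emph{maximum} weight matching for $g$. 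By the same argument in reverse, $vs_g(N) \le vs_f(N) + L_1(f,g)$... but that would give a factor $2$. The cleaner route is: apply the pointwise estimate in the \emph{other} direction, $vs_g(x,y) \le vs_f(x,y) + |f(x)-g(x)| + |f(y)-g(y)|$, but only to the edges of $N$ after we have already passed to $N$. So: $vs_f(M) \le vs_g(M) + L_1(f,g) \le vs_g(N) + L_1(f,g)$, and then separately $vs_g(N) \le vs_f(N) + L_1(f,g) \le vs_f(M) + L_1(f,g)$ since $M$ is an MWM for $f$. Chaining the first with $vs_g(N)\le vs_f(M)+L_1(f,g)$ is circular, so instead I chain $vs_f(M)\le vs_g(N)+L_1(f,g)$ with $vs_g(N)\le vs_f(N)+L_1(f,g)$ to get $vs_f(M)-vs_f(N)\le 2L_1(f,g)$, which is the bound up to a constant; for the claimed constant $1$ one observes that the $L_1(f,g)$ in $vs_f(M)\le vs_g(M)+L_1(f,g)$ and the one in $vs_g(N)\le vs_f(N)+L_1(f,g)$ count the \emph{same} vertex-deviations only if $M$ and $N$ cover the same vertices, which they need not — so the honest statement is either the constant-$2$ bound, or one restricts attention to a single inequality direction. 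I expect the intended proof uses only the one-directional estimate twice with $M$ optimal for $f$ and $N$ optimal for $g$, giving $vs_f(M)-vs_f(N)\le L_1(f,g)$ directly: indeed $vs_f(N)\ge vs_g(N) - L_1(f,g) \ge vs_g(M) - L_1(f,g) \ge vs_f(M) - 2L_1(f,g)$ — again the honest constant is $2$. I will present the $L_1(f,g)$ bound via the one clean inequality $vs_f(M)\le vs_f(N)+L_1(f,g)$: take $N$, view its edges in $G_f$, apply the pointwise estimate $vs_f(x,y)\le vs_g(x,y)+(\text{deviations})$ to edges of $M$, and use optimality of $N$ under $g$; the deviation sum telescopes to $L_1(f,g)$ exactly because $M$ is a matching.

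The main obstacle is exactly this bookkeeping of the constant: the naive two-sided chaining costs a factor $2$, and getting the stated constant $1$ requires being careful to charge each vertex's deviation $|f(x)-g(x)|$ only once, which works cleanly for \emph{one} of the two inequalities (the one where we bound $vs_f(M)$ from above by comparing $M$ against the $g$-optimal $N$) but needs the symmetric setup re-run from scratch for the reverse. I would handle this by proving the single inequality $vs_f(M)\le vs_f(N)+L_1(f,g)$ carefully, then invoking symmetry (swap $f\leftrightarrow g$, $M\leftrightarrow N$) for $vs_g(N)\le vs_g(M)+L_1(f,g)$, and finally noting $|vs_f(M)-vs_f(N)|\le|vs_f(M)-vs_g(N)|+|vs_g(N)-vs_f(N)|$ with each term $\le L_1(f,g)$ — or, if the constant $2$ is acceptable for the downstream application in \Lem{mi}, simply present the clean two-step chain. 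Everything else is routine: the pointwise inequality for $vs$, the fact that summing over a matching charges each vertex once, and the optimality of each matching in its own graph.
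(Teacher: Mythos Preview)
Your approach works directly at the level of matchings via the pointwise estimate $vs_f(x,y)\le vs_g(x,y)+|f(x)-g(x)|+|f(y)-g(y)|$, and you correctly diagnose that chaining this through the optimality of $M$ and $N$ honestly yields only $|vs_f(M)-vs_g(N)|\le 2L_1(f,g)$, not the constant~$1$ stated. You never actually resolve this: the proposal oscillates between several candidate chains, concedes the factor~$2$, and ends with ``if the constant $2$ is acceptable \ldots''. So as a proof of the claim \emph{as stated}, there is a genuine gap.

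The paper's proof avoids the matching-level bookkeeping entirely by invoking \Lem{l1-mwm} (proved just before this claim), which gives $vs_f(M)=L_1(f,\cP)$ and $vs_g(N)=L_1(g,\cP)$. After this identification the claim reduces to the elementary fact that distance to a fixed set is $1$-Lipschitz in the $L_1$ metric: with $p_f,p_g\in\cP$ the respective closest functions, one has
\[
L_1(f,\cP)-L_1(g,\cP)=L_1(f,p_f)-L_1(g,p_g)\le L_1(f,p_g)-L_1(g,p_g)\le L_1(f,g),
\]
and symmetrically for the other direction. This gives the constant~$1$ with no effort. (The paper's opening sentence about the symmetric difference of $M$ and $N$ and alternating paths is a red herring; it is never used.)

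In short: you are missing the key shortcut of passing through \Lem{l1-mwm}. Your edge-by-edge argument is not wrong, but it is both harder and lossier; the intended proof is a two-line triangle inequality once you remember that the MWM weight equals the $L_1$ distance to the property.
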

\begin{proof}
Look at the symmetric difference of $M$ and $N$. This gives us alternating paths and cycles. Let $p_f\in \cP$ and $p_g\in \cP$ be two real valued functions on the hypergrid that are closest to $f$ and $g$, that is, $L_1(f,\cP)=L_1(f,p_f)$ and $L_1(g,\cP)=L_1(g,p_g)$. 

\begin{eqnarray*}
vs(M)-vs(N)&=&\sum_{x\in [n]^d}|f(x)-p_1(x)|-\sum_{x\in [n]^d}|g(x)-p_2(x)|\\
&=& \sum_{x\in [n]^d}(|f(x)-p_1(x)|-|g(x)-p_2(x)|)\\
&\le& \sum_{x\in [n]^d}(|f(x)-p_2(x)|-|g(x)-p_2(x)|)\\
&\le&\sum_{x\in [n]^d}(|f(x)-p_2(x)-(g(x)-p_2(x))|)\\
&=&\sum_{x\in [n]^d}(|f(x)-g(x)|)=L_1(f,g)\\
\end{eqnarray*}

The first inequality follows from the fact that $L_1(f,p_2)\ge L_1(f,p_1)$. The second inequality follows from the triangle inequality. Similarly, one can prove that $vs(N)-vs(M)\le L_1(f,g)$. Hence the claim.

\end{proof}
\noindent
\Lem{grid-to-line-gen} for the uniform distribution follows from the following lemma since $vs_f(M_0)= L_1(f, \cP)$. 


\begin{lemma}\label{lem:mi}
Let $\cL_i$ denote the set of lines that are axis-parallel to dimension $i$ for all $1\leq i\leq d$. Then we have 
$$2\cdot\sum_{\ell\in \cL_i} L_1(\frest{f}{\ell}, \cP)\geq \left(vs_f(M_{i-1}) -vs_f(M_{i})\right)$$.%
\end{lemma}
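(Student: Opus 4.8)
\textbf{Proof plan for \Lem{mi}.}

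The plan is to pass from the grid matching $M_{i-1}$ to $M_i$ by a slice-by-slice local modification, and to charge the resulting drop in violation score to the distances $L_1(\frest{f}{\ell},\cP)$ over $i$-lines $\ell$. First I would recall that both $M_{i-1}$ and $M_i$ are MWMs in $G_f$, so $vs_f(M_{i-1}) \ge vs_f(M_i)$, and the quantity we must bound is nonnegative. The key structural observation is that $M_{i-1}$ has no $j$-cross pairs for $j<i$, hence every pair of $M_{i-1}$ lies inside a single $(i-1)$-dimensional slice $\slice_\a$ (using the slice notation where $\a\in[n]^{i-1}$): if $(x,y)\in M_{i-1}$ then $x_j=y_j$ for all $j<i$, so both endpoints share the same $\a$. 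Consequently $M_{i-1}$ decomposes as a disjoint union $\bigcup_{\a\in[n]^{i-1}} M_{i-1}^\a$, where $M_{i-1}^\a$ is a matching on $\slice_\a$, and moreover $vs_f(M_{i-1}) = \sum_\a vs_{\frest{f}{\a}}(M_{i-1}^\a)$. The same holds for $M_i$ with the extra property that within each slice its pairs are also not $i$-cross, i.e.\ each pair of $M_i^\a$ lies within a single $i$-line.

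Next I would argue that, restricted to a fixed slice $\slice_\a$, $M_{i-1}^\a$ is a MWM of $G_{\frest{f}{\a}}$: if some slice admitted a heavier matching, reassembling it with the other slices' pieces would beat $M_{i-1}$ globally, contradicting maximality (here one uses that violated pairs across different $(i-1)$-slices are irrelevant because $M_{i-1}$ uses none, and that the global violation graph restricted to a slice is exactly the violation graph of the restricted function — this follows from the projection property in \Lem{dist-prop} and the fact that $G_f$ is defined edge-wise by $vs_f$). Similarly $M_i^\a$ is a MWM of $G_{\frest{f}{\a}}$ subject to no $i$-cross pairs. So the global drop splits as
\[
vs_f(M_{i-1}) - vs_f(M_i) \;=\; \sum_{\a\in[n]^{i-1}}\Bigl(vs_{\frest{f}{\a}}(M_{i-1}^\a) - vs_{\frest{f}{\a}}(M_i^\a)\Bigr),
\]
and it suffices to show, for each slice $\slice_\a$ (which is an $[n]^{d-i+1}$ hypergrid whose first coordinate is dimension $i$), that the per-slice drop is at most $2\sum_{\ell\subseteq\slice_\a} L_1(\frest{f}{\ell},\cP)$, summed over $i$-lines $\ell$ inside $\slice_\a$.

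For the per-slice bound I would use \Thm{no-viol} together with \Clm{matchdiff}. Define the auxiliary function $h$ on $\slice_\a$ obtained from $\frest{f}{\a}$ by replacing, on each $i$-line $\ell\subseteq\slice_\a$, the restriction $\frest{f}{\ell}$ with its closest function in $\cP$ on the line; so $h$ has \emph{no} violations along $i$-lines, and $L_1(\frest{f}{\a}, h) = \sum_{\ell\subseteq\slice_\a} L_1(\frest{f}{\ell},\cP)$ (one must check $h$ is well-defined and that line-wise corrections do not interact — they don't, since each point lies on exactly one $i$-line within the slice). By \Thm{no-viol} applied to $h$ on $\slice_\a$, there is a MWM $N$ of $G_h$ with no $i$-cross pairs. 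Now \Clm{matchdiff} gives $|vs_{\frest{f}{\a}}(M_{i-1}^\a) - vs_h(N')| \le L_1(\frest{f}{\a},h)$ for a MWM $N'$ of $G_h$, and also (applying the ``no $i$-cross'' constrained versions of the claim — this is the step that needs a small extension of \Clm{matchdiff}) $|vs_{\frest{f}{\a}}(M_i^\a) - vs_h(N)| \le L_1(\frest{f}{\a},h)$, where both sides are over matchings forbidden to use $i$-cross pairs. Since $N$ is an unconstrained MWM of $G_h$ with no $i$-cross pairs, $vs_h(N)$ equals both the constrained and unconstrained optimum, so the two inequalities combine via the triangle inequality to give $vs_{\frest{f}{\a}}(M_{i-1}^\a) - vs_{\frest{f}{\a}}(M_i^\a) \le 2 L_1(\frest{f}{\a},h) = 2\sum_{\ell\subseteq\slice_\a} L_1(\frest{f}{\ell},\cP)$. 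Summing over $\a$ and noting $\bigcup_\a\{\ell\subseteq\slice_\a\} = \cL_i$ finishes the proof.

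\textbf{Main obstacle.} The delicate point is the interaction between the ``no $i$-cross pairs'' constraint and \Clm{matchdiff}/\Thm{no-viol}: \Clm{matchdiff} as stated compares unconstrained MWMs, so I need to verify it still holds when both matchings are restricted to avoid $i$-cross pairs (the alternating-path argument should go through verbatim, since it never creates cross pairs from non-cross pairs), and I need \Thm{no-viol} to produce a constrained MWM of $G_h$ that is \emph{also} a global (unconstrained) MWM of $G_h$ — which is exactly its content, but lining up ``constrained optimum $=$ unconstrained optimum for $h$'' with ``the matchings $M_{i-1}^\a,M_i^\a$ are MWMs of the restricted violation graph'' requires being careful that the violation graph of a slice really is the restriction of $G_f$, i.e.\ that $vs$ behaves well under the projection/slicing operation. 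That reduction to slices, and the claim that $vs_f$ of a cross-pair-free matching is additive over slices, is where I would spend the most care.
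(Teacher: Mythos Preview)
Your overall strategy matches the paper's: decompose $M_{i-1}$ and $M_i$ across the slices $\slice_\a$ for $\a\in[n]^{i-1}$, introduce the auxiliary function obtained by correcting $f$ along $i$-lines (the paper calls it $f^{(i)}$, your $h$), invoke \Thm{no-viol} to get an MWM $N^\a$ of the auxiliary function with no $i$-cross pairs, and apply \Clm{matchdiff} twice to pick up the factor $2$. So the plan is correct.

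The one place where you and the paper diverge is exactly the point you flag as the main obstacle. You propose to compare $M_i^\a$ with $N$ via a \emph{constrained} version of \Clm{matchdiff} (both sides restricted to matchings with no $i$-cross pairs), and you justify it by saying ``the alternating-path argument should go through verbatim.'' Two remarks here. First, the paper's proof of \Clm{matchdiff} does not use alternating paths at all; despite the opening sentence, the actual argument is a three-line triangle-inequality computation based on \Lem{l1-mwm}, i.e.\ on the identity $vs_f(\text{MWM}) = L_1(f,\cP)$. That identity is not available for constrained MWMs, so your suggested justification does not apply as stated. Second, and more importantly, the paper sidesteps the need for any constrained version entirely: since $N^\a$ has no $i$-cross pairs, it decomposes as $N^\a=\bigcup_{j} N^{(\a\circ j)}$, and each piece $N^{(\a\circ j)}$ is an \emph{unconstrained} MWM on the finer slice $\slice_{(\a\circ j)}$ with respect to $f^{(i)}_{|(\a\circ j)}$; likewise $M_i^{(\a\circ j)}$ is an unconstrained MWM on $\slice_{(\a\circ j)}$ with respect to $f_{|(\a\circ j)}$. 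Thus the second application of \Clm{matchdiff} is done slice-by-slice at depth $i$ (not $i-1$), on genuinely unconstrained MWMs, and then summed over $j$. This is a cleaner resolution of your obstacle than proving a constrained variant, and it is worth adopting.
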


\begin{proof}


Since $M_{i-1}$ has no $j$-cross pairs for $j\leq i-1$, all pairs of $M_{i-1}$ have both endpoints in the same slice $\slice_\a$ for some $\a\in [n]^{i-1}$. Thus, $M_{i-1}$ partitions into sub-matchings in each $\slice_\a$. Let $M^\a_{i-1}$ be the pairs of $M_{i-1}$ with both end points in slice $\slice_\a$.
\begin{equation}\label{eq:1}
vs_f(M_{i-1}) = \sum_{\a\in [n]^{i-1}} vs_f(M^\a_{i-1})
\end{equation}
Similarly, $M^\a_i$ is defined, and infact since $M_i$ has no $i$-cross pairs either, we get for every $\a\in [n]^{i-1}$,
$vs_f(M^\a_{i}) = \sum_{j=1}^n vs_f(M^{(\a\circ j)}_{i})$.
where $(\a\circ j)$ is the $i$-dimensional vector obtained by concatenating $j$ at the end of $\a$.
The following is a consequence of the partition of $M_{i-1}$ and $M_i$ across the slices.
\begin{observation}
For any $\a\in [n]^{i-1}$, $M^\a_{i-1}$ is the MWM in $\slice_\a$ wrt $\frest{f}{\a}$. Furthermore, for any $j\in [n]$, $M^{(\a\circ j)}_i$ is the MWM in $\slice_{(\a\circ j)}$ wrt $\frest{f}{(\a\circ j)}$.
\end{observation}
Let $\fii$ be the closest function to $f$ with no violations along dimension $i$, that is, for any $x,y$ with $x_i<y_i$ and $x_j = y_j, j\neq i$, we have $\fii(x) < \fii(y)$.
Note that by definition, $L_1(f,\fii)=\sum_{\ell \in \cL_i}{L_1(\frest{f}{\ell},\cP)}$. 
 
Fix $\a\in [n]^{i-1}$ and focus on the $\a$-slice $\slice_\a$.  Note that  $\fii$ has no violations along the $i$-lines, 
neither does $\frest{f^{(i)}}{\a}$. Therefore, by \Thm{no-viol} there exists an MWM $N^\a$ in $\slice_\a$ w.r.t $\frest{f^{(i)}}{\a}$ that
has no $i$-cross pairs.
Therefore, $N^\a$ partitions as $N^\a = \bigcup_{j=1}^n N^{(\a\circ j)}$. Furthermore, each matching $N^{(\a\circ j)}$ is the MWM in $\slice_{(\a\circ j)}$ wrt $f^{(i)}_{|(\a\circ j)}$.

\noindent
Since $M^\a_{i-1}$ is a MWM wrt $\frest{f}{\a}$ and $N^\a$ is a MWM wrt $\fia$ in $\slice_\a$, \Clm{matchdiff} gives 
\begin{equation}\label{eq:3}
vs_f(N^\a) \geq vs_f(M^\a_{i-1}) - L_1(\frest{f}{\a},\frest{f^{(i)}}{\a})
\end{equation}
Since $M^{(\a\circ j)}_i$ is a MWM wrt $f_{|(\a\circ j)}$ and  $N^{(\a\circ j)}$ is a MWM  wrt $f^{(i)}_{|(\a\circ j)}$ in $\slice_{(\a\circ j)}$ , \Clm{matchdiff} gives us 
$vs_f(M^{(\a\circ j)}_i) \geq vs_f(N^{(\a\circ j)}) - L_1(\frest{f}{(\a\circ j)}, \frest{f^{(i)}}{(\a\circ j)})$. Summing over all $j$,
\begin{equation}\label{eq:4}
vs_f(M^\a_i) \geq vs_f(N^\a) - L_1(\frest{f}{\a},\frest{f^{(i)}}{\a})
\end{equation}
Adding \Eqn{3},\Eqn{4} over all $\a\in [n]^{i-1}$ gives 
$vs_f(M_i) \geq vs_f(M_{i-1}) - 2\sum_{\a\in [n]^{i-1}}L_1(\frest{f}{\a},\frest{f^{(i)}}{\a}) = |M_{i-1}| - 2\cdot L_1(f,f^{(i)})$.
 Adding over all $i\in [d]$ proves \Lem{grid-to-line-gen} since $ L_1(f,f^{(i)})=\sum_{\ell\in \cL_i}L_1(\frest{f}{\ell}, \cP)$.
\end{proof}

\section{Bounded derivative testing on a line}\label{sec:line}
Consider a function $f:[n]\mapsto [a,b]$, with bounding functions being $l:[n-1]\rightarrow S_l$ and $u:[n-1]\mapsto S_u$ such that $l(x)\le \partial f(x)\le u(x)$ for each $x\in [n-1]$. We note here that for monotonicity, the bounding 
family becomes $l_i=0$ and $u_i=r$ for all $i$. Given two functions $f:D\mapsto [0,r]$ and $g:D\mapsto [0,r]$, we define $dist(f,g)=\frac{||f-g||_1}{|D|\cdot r}$ .Therefore $dist(f,g)\in [0,1]$ and is \emph{scale-invariant}.

From now on, we will assume that $l(x)=-u(x)$. This assumption makes the analysis of the line tester much cleaner. The following reduction shows that this assumption is not without loss of generality. 

We define a function $g:[n]\mapsto [a',b']$ as $g(x)=f(x)+\sum_{v=x}^{n-1}\frac{u(v)+l(v)}{2}$. Also, the bounding family of $g$ is defined 
as $\{u',v'\}$ where $l'(v)=-\frac{(u(v)-l(v))}{2}$ and $u'(v)=\frac{(u(v)-l(v))}{2}$.  Note that since $u(v)>l(v)$, $l'(v)$ and $u'(v)$ are always non-zero. 
\Clm{negpos} shows that the violation score of each edge with respect to $g$ is same as with respect to $f$. 
\begin{claim}\label{clm:negpos}
Let $f$ and $g$ be defined as above. Then for any edge $(x,y)$ ($x<y$), $vs_f(x,y)=vs_g(x,y)$.
\end{claim}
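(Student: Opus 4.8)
The plan is to compute $vs_f(x,y)$ and $vs_g(x,y)$ directly from their definitions and observe that the shift applied to $f$ and the modification of the bounding family cancel each other exactly on every edge. Recall that for $x<y$ we have $\pdi(x,y) = \sum_{t=x}^{y-1} u(t)$ (the path decrementing the coordinate, weight $u$ on each edge) and $\pdi(y,x) = -\sum_{t=x}^{y-1} l(t)$ (the path incrementing the coordinate, weight $-l$ on each edge), by the expression \Eqn{supergeneralLip}. So $vs_f(x,y) = \max\{f(x)-f(y)-\sum_{t=x}^{y-1}u(t),\ f(y)-f(x)+\sum_{t=x}^{y-1}l(t)\}$, and similarly $vs_g(x,y) = \max\{g(x)-g(y)-\sum_{t=x}^{y-1}u'(t),\ g(y)-g(x)+\sum_{t=x}^{y-1}l'(t)\}$ with $u'(t) = \tfrac{u(t)-l(t)}{2}$ and $l'(t) = -\tfrac{u(t)-l(t)}{2}$.

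The key step is the identity $g(x)-g(y) = f(x)-f(y) + \sum_{t=y}^{n-1}\tfrac{u(t)+l(t)}{2} - \sum_{t=x}^{n-1}\tfrac{u(t)+l(t)}{2} = f(x)-f(y) - \sum_{t=x}^{y-1}\tfrac{u(t)+l(t)}{2}$, using that the tails from $y$ and from $x$ differ exactly by the block $\sum_{t=x}^{y-1}$. Then I substitute into the first branch of $vs_g$:
\[
g(x)-g(y) - \sum_{t=x}^{y-1}u'(t) = f(x)-f(y) - \sum_{t=x}^{y-1}\frac{u(t)+l(t)}{2} - \sum_{t=x}^{y-1}\frac{u(t)-l(t)}{2} = f(x)-f(y) - \sum_{t=x}^{y-1}u(t),
\]
which is exactly the first branch of $vs_f(x,y)$. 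An identical computation for the second branch gives $g(y)-g(x) + \sum_{t=x}^{y-1} l'(t) = f(y)-f(x) + \sum_{t=x}^{y-1} l(t)$, the second branch of $vs_f(x,y)$. Since both arguments of the two maxima agree term by term, $vs_f(x,y) = vs_g(x,y)$.

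I do not expect a genuine obstacle here — it is a one-line telescoping cancellation — but the part that needs care is getting the summation endpoints and signs right: the convention that $\pdi(x,y)$ for $x<y$ uses the $u$-weights (because the shortest path from $x$ to $y$ goes by \emph{decrementing} into $x$, equivalently the reverse-direction edges carry weight $u$), and the fact that the defining sum for $g$ runs from $v=x$ to $n-1$, so the difference $g(x)-g(y)$ picks up $-\sum_{t=x}^{y-1}\tfrac{u(t)+l(t)}{2}$ and not $+$. I would state these endpoint conventions explicitly at the start so the cancellation is transparent, then present the two displayed substitutions above and conclude.
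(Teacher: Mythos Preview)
Your approach is exactly the paper's: compute each branch of $vs_g(x,y)$ directly and show it equals the corresponding branch of $vs_f(x,y)$. However, your execution contains two sign errors that happen to cancel against each other in a way that obscures the correct computation.

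First, your convention for $\pdi$ is reversed. By \Def{dist} the edge $(x,x+\be_r)$ carries weight $-l_r(x_r)$, so for $x<y$ on the line one has $\pdi_f(x,y)=-\sum_{t=x}^{y-1}l(t)$ and $\pdi_f(y,x)=\sum_{t=x}^{y-1}u(t)$, not the other way around. Second, since the defining sum for $g$ runs from $v=x$ upward, and $x<y$, the sum for $g(x)$ contains \emph{more} terms than the sum for $g(y)$, giving
\[
g(x)-g(y)=f(x)-f(y)+\sum_{t=x}^{y-1}\frac{u(t)+l(t)}{2},
\]
with a plus sign, not minus. With both corrections the first branch becomes
\[
g(x)-g(y)-\pdi_g(x,y)=f(x)-f(y)+\sum_{t=x}^{y-1}\frac{u(t)+l(t)}{2}-\sum_{t=x}^{y-1}\frac{u(t)-l(t)}{2}=f(x)-f(y)+\sum_{t=x}^{y-1}l(t)=f(x)-f(y)-\pdi_f(x,y),
\]
and analogously for the second branch, exactly as in the paper. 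Your two errors produce an internally consistent chain of equalities, but the displayed quantity $f(x)-f(y)-\sum_t u(t)$ is not $f(x)-f(y)-\pdi_f(x,y)$ in general, so the proof as written does not establish the claim. Fixing the signs makes it correct and identical to the paper's argument.
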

\begin{proof}
We have 
\begin{eqnarray}\label{eq:ftog1}
g(x)-g(y)-\pdi_g(x,y) &=& f(x)-f(y)+\sum_{v=x}^{y-1}\frac{(u(v)+l(v))}{2}-\sum_{v=x}^{y-1}\frac{(u(v)-l(v))}{2}\nonumber\\&=&f(x)-f(y)+\sum_{x}^{y-1}l(v)=f(x)-f(y)-\pdi_f(x,y)
\end{eqnarray}
Similarly we have 
\begin{eqnarray}\label{eq:ftog1}
g(y)-g(x)-\pdi_g(y,x) &=& f(y)-f(x)-\sum_{v=x}^{y-1}\frac{(u(v)+l(v))}{2}-\sum_{v=x}^{y-1}\frac{(u(v)-l(v))}{2}\nonumber\\&=&f(y)-f(x)-\sum_{x}^{y-1}u(v)=f(y)-f(x)-\pdi_f(y,x)
\end{eqnarray}

Therefore we have 
$$vs_g(x,y)=\max\left\{f(x)-f(y)-\pdi_f(x,y),f(y)-f(x)-\pdi_f(y,x)\right\}=vs_f(x,y)$$
\end{proof}

\begin{corollary}\label{cor:mwm}
Maximum weight matchings in the violation graphs of $f$ and $g$ are identical.
\end{corollary}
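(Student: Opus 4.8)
The statement to prove is \Cor{mwm}: the maximum weight matchings in the violation graphs of $f$ and $g$ are identical. The plan is to derive this immediately from \Clm{negpos}, which establishes that $vs_f(x,y) = vs_g(x,y)$ for every edge $(x,y)$ with $x<y$. First I would observe that a maximum weight matching in a weighted graph depends only on the vertex set, the edge set, and the edge weights. The violation graphs $G_f$ and $G_g$ share the same vertex set, namely $[n]$ (or more generally the common domain $\bD$).

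Next I would argue that the edge sets coincide. By definition, $(x,y)$ is an edge of $G_f$ iff $vs_f(x,y) > 0$, and similarly for $G_g$. Since \Clm{negpos} gives $vs_f(x,y) = vs_g(x,y)$ for all pairs $(x,y)$, the condition $vs_f(x,y) > 0$ holds iff $vs_g(x,y) > 0$. Hence $E_f = E_g$, so $G_f$ and $G_g$ are the same graph as far as their edge structure is concerned.

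Finally I would note that the weight assigned to each edge $(x,y)$ in $G_f$ is exactly $vs_f(x,y)$, and the weight in $G_g$ is $vs_g(x,y)$; these are equal by \Clm{negpos}. Therefore $G_f$ and $G_g$ are identical as weighted graphs, and consequently any algorithm (or combinatorial characterization) that computes a maximum weight matching must return the same set of matchings for both. In particular the set of maximum weight matchings of $G_f$ equals the set of maximum weight matchings of $G_g$, which is the claim.

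There is essentially no obstacle here: the corollary is a direct packaging of \Clm{negpos} together with the definition of the violation graph as a weighted graph whose edges and weights are determined by the violation scores. The only point requiring a word of care is to make explicit that $vs$ is symmetric in its arguments (so that listing pairs with $x<y$ suffices to determine the undirected weighted graph), which is immediate from the definition $vs_f(x,y) = \max\{f(x)-f(y)-\pdi(x,y), f(y)-f(x)-\pdi(y,x)\}$ being invariant under swapping $x$ and $y$.
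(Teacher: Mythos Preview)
Your proposal is correct and matches the paper's approach exactly: the paper states \Cor{mwm} as an immediate consequence of \Clm{negpos} without further proof, and your write-up simply spells out the obvious details (same vertex set, same edge set since $vs_f(x,y)>0 \iff vs_g(x,y)>0$, same weights). There is nothing to add.
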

Note that in the above reduction, the range size ($b'-a'$) of $g$ might be much larger than $b-a$, but this is not a problem since our final bound is independent of range size.

Suppose  we are given $f:[n-1]\mapsto [a,b] $  and the bounding functions $\{-u, u\}$ where $u:[n-1]\mapsto S_u$. Here $S_u$ is the set of positive constants that are independent of $n$ and $b-a$. Let $b-a=r$, $u_M=\max\{x\in [n]:u(x)\}$ and $u_m=\min\{x\in [n]:u(x)\}$

\begin{shaded}
{\bf Line Tester ($f$)}
\begin{asparaenum}
\itemsep0em 
\item Let $P_f$ be the set of pairs $\{(x,y)|x<y\}$ such that $y-x\le \frac{r}{u_m}$.\label{item:1}
\item Query an uniformly randomly selected pair from $P_f$.
\item if $vs_f(x,y)>0$ then {\bf reject}
\item {\bf accept}
\end{asparaenum}
\end{shaded}

Consider a violated pair $(x,y)$ ($x<y$) such that $f(x)-f(y)>\pdi_f(x,y)$. Now we have
$$r\ge f(x)-f(y)\ge \sum_{v=x}^{y-1}u(v)\ge u_m\cdot (y-x)$$
Therefore $y\le x+\frac{r}{u_m}$. Same bound can be obtained by considering the case when $f(y)-f(x)-\pdi_f(y,x)>0$. Therefore all the violated pairs lie in $P_f$
Note that total number of pairs in $P_f$ is at most $n\cdot\min \{n-1, \frac{r}{u_m}\}$.  We first give the lower bound on the number of violated pairs in $P_f$ denoted by $V(P_f)$. We get the lower bound using \Clm{lin} and \Clm{many}.
\begin{claim}\label{clm:lin}
Let $(x,y)$ be a pair violated by $f$ over $[n]$ and let $v=\left\lceil\frac{vs_{f}(x,y)}{2\cdot u_M}\right\rceil-1$. Then for all $z\in [x-v,y+v]\cap [n]$, $f$ violates at least one of the unordered pairs $(x,z)$ or $(y,z)$
\end{claim}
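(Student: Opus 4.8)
The plan is to exploit the quasimetric structure of $\pdi$---specifically the triangle inequality and the ``robustness'' of the upper bound $u$---to show that a single violated pair $(x,y)$ forces a whole interval of points around $x$ and $y$ to participate in a violation. First I would set up notation: assume without loss of generality that the violation at $(x,y)$ with $x<y$ is witnessed by $f(x)-f(y) > \pdi_f(x,y)$, so by \Clm{vs+} we have $vs_f(x,y) = f(x)-f(y)-\pdi_f(x,y) > 0$; recall that on the line $\pdi_f(x,y) = \sum_{t=x}^{y-1}u(t)$ and $\pdi_f(y,x) = -\sum_{t=x}^{y-1}u(t)$ (using $l=-u$ after the reduction of \Clm{negpos}).

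Next, the key quantitative step: fix any $z \in [x-v, y+v]\cap[n]$ and suppose, for contradiction, that neither $(x,z)$ nor $(y,z)$ is violated by $f$. Then $|f(x)-f(z)| \le \max\{\pdi_f(x,z),\pdi_f(z,x)\}$ and similarly for $(y,z)$; since $\pdi_f(a,b)$ and $\pdi_f(b,a)$ differ only in sign on a line, both are bounded in absolute value by $\sum_{t \in [a,b)} u(t) \le u_M \cdot |a-b| \le u_M \cdot(|y-x| + v)$ wait---more carefully, I want to bound $|f(x)-f(z)|$ and $|f(y)-f(z)|$ by quantities that, when combined, are too small to accommodate $f(x)-f(y) = \pdi_f(x,y) + vs_f(x,y)$. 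The cleanest route: writing $f(x)-f(y) = (f(x)-f(z)) + (f(z)-f(y))$, non-violation of $(x,z)$ gives $f(x)-f(z) \le \pdi_f(x,z)$ and non-violation of $(z,y)$ gives $f(z)-f(y) \le \pdi_f(z,y)$, so $f(x)-f(y) \le \pdi_f(x,z) + \pdi_f(z,y)$. By the triangle inequality this is at least $\pdi_f(x,y)$, so it does not immediately contradict anything---the point is to bound the \emph{excess} $\pdi_f(x,z)+\pdi_f(z,y) - \pdi_f(x,y)$ by $2 u_M v$ and compare with $vs_f(x,y)$.

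The main obstacle, then, is a careful case analysis on where $z$ sits relative to $x$ and $y$ (there are three positions: $z < x$, $x \le z \le y$, or $z > y$), computing in each case the discrepancy $\pdi_f(x,z)+\pdi_f(z,y)-\pdi_f(x,y)$ in terms of $u$-values on the overlap. When $x \le z \le y$ the discrepancy is exactly $0$ by linearity (\Lem{dist-prop}), so the non-violation assumption forces $f(x)-f(y) \le \pdi_f(x,y)$, contradicting $vs_f(x,y)>0$ outright; that case needs no slack at all. When $z < x$, the extra cost is $2\sum_{t=z}^{x-1}u(t) \le 2 u_M(x-z) \le 2 u_M v$, and since $v = \lceil vs_f(x,y)/(2u_M)\rceil - 1 < vs_f(x,y)/(2u_M)$, we get $2u_M v < vs_f(x,y)$, so $f(x)-f(y) \le \pdi_f(x,y) + 2u_M v < \pdi_f(x,y) + vs_f(x,y) = f(x)-f(y)$, a contradiction. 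The case $z > y$ is symmetric, with extra cost $2\sum_{t=y}^{z-1}u(t) \le 2u_M(z-y) \le 2u_M v$. Hence the non-violation assumption is untenable and at least one of $(x,z),(y,z)$ must be violated, proving the claim. I would also double-check the boundary behavior of the ceiling (when $vs_f(x,y)$ is an exact multiple of $2u_M$ the strict inequality $v < vs_f(x,y)/(2u_M)$ still holds since we subtract $1$), and confirm the symmetric starting assumption $f(y)-f(x) > \pdi_f(y,x)$ yields the same conclusion by relabeling.
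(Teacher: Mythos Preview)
Your proposal is correct and follows essentially the same approach as the paper: assume for contradiction that neither $(x,z)$ nor $(y,z)$ is violated, add the two non-violation inequalities to get $f(x)-f(y)\le \pdi_f(x,z)+\pdi_f(z,y)$, and then bound the excess $\pdi_f(x,z)+\pdi_f(z,y)-\pdi_f(x,y)$ by $2u_M\cdot\mathrm{dist}(z,[x,y])\le 2u_M v < vs_f(x,y)$, yielding a contradiction. The paper phrases the excess computation via linearity as $\pdi_f(x,z)+\pdi_f(z,x)$ rather than expanding the sums directly, but the substance is identical; one minor slip in your setup (you write $\pdi_f(y,x)=-\sum u(t)$, whereas after the $l=-u$ reduction $\pdi_f$ is symmetric so $\pdi_f(y,x)=+\sum u(t)$) does not affect the argument since you never use that formula.
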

\begin{proof}
Note that the claim is true for $z\in [x,y]\cap [n]$ by linearity of quasimetric $\pdi_{f}$ induced by the bounding function family. Consider the case when $z\in[x-v,x]\cap [n]$. The case for $z\in [y,y+v]$ is analogous. Assume that $vs_f(x,y)= f(x)-f(y)-\pdi_{f}(x,y)$. The other case is symmetric. 
Assume for the contradiction that $(x,z)$ and $(y,z)$ are both not violated. Then $f(x)-f(z)\le \pdi_{f}(x,z)$ and $f(z)-f(y)\le \pdi_{f}(z,y)$. Adding both the inequalities we get $f(x)-f(y)\le \pdi_{f}(x,z)+\pdi_{f}(z,y)$. We also have $f(x)-f(y)=\pdi_{f}(x,y)+vs_{f}(x,y)$. Therefore we have
\begin{eqnarray*}
vs_{f}(x,y)+\pdi_{f}(x,y)&\le& \pdi_{f}(x,z)+\pdi_{f}(z,y)\\
vs_{f}(x,y)&\le& \pdi_{f}(x,z)+\pdi_{f}(z,y)-\pdi_{f}(x,y)\\
vs_{f}(x,y)&\le& \pdi_{f}(x,z)+\pdi_{f}(z,x) \text{ (since $\pdi_{f}(z,y)=\pdi_{f}(z,x)+\pdi_{f}(x,y)$)}
\end{eqnarray*}

Therefore we have $vs_{f}(x,y)\le  \pdi_{f}(x,z)+\pdi_{f}(z,x)=\sum_{\alpha=z}^{x-1}(u(\alpha)-l(\alpha))\le (2\cdot u_M)(x-z)$
This implies that $z\le x-\frac{vs_{f}(x,y)}{2\cdot u_M}$. Note that this is a contradiction since
 $$z\ge x-\left\lceil \frac{vs_{f}(x,y)}{2\cdot u_M}\right\rceil+1>x-\frac{vs_{f}(x,y)}{2\cdot u_M}$$ 
\end{proof}

\begin{claim}\label{clm:many}
Let $(x,y)$ be a pair violated by $f$ over $[n]$. Then $f$ violates at least $\min\{\frac{vs_{f}(x,y)}{2\cdot u_M}, n-1\}$ pairs.
\end{claim}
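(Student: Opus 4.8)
The plan is to derive Claim~\ref{clm:many} as a direct counting consequence of Claim~\ref{clm:lin}. Fix a pair $(x,y)$ with $x<y$ that is violated by $f$, and set $v=\left\lceil\frac{vs_f(x,y)}{2\cdot u_M}\right\rceil-1$ exactly as in Claim~\ref{clm:lin}. That claim tells us that for \emph{every} $z$ in the integer interval $[x-v,y+v]\cap[n]$, at least one of the unordered pairs $(x,z)$ or $(y,z)$ is violated by $f$. So I would produce, for each such $z$, a witnessing violated pair incident to $x$ or to $y$, and then lower-bound the number of \emph{distinct} pairs so obtained.

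First I would count the size of the set of witnesses. The interval $[x-v,y+v]$ has length $(y+v)-(x-v)+1=(y-x)+2v+1\ge 2v+1$ integers (using $y>x$), so $|[x-v,y+v]\cap[n]|\ge \min\{2v+1,\,n\}$. Each $z$ in this set gives a violated pair of the form $(x,z)$ or $(y,z)$; distinct values of $z$ give distinct pairs \emph{except} possibly when the same pair arises as both "$(x,z)$'' and "$(y,z')$'' for different $z,z'$ — but that cannot happen since $(x,z)$ contains $x$ and not $y$ (as $z$ ranges avoid coinciding both coordinates; and even if $z=y$ the pair $(x,y)$ is itself violated and counted once). Thus the map $z\mapsto$ (a violated pair incident to $x$ or $y$) is injective on $[x-v,y+v]\cap[n]$ up to a constant-factor loss of at most $2$ (each violated pair $(x,z)$ is named by exactly one $z$, likewise $(y,z)$); so the number of violated pairs is at least $\tfrac12\min\{2v+1,n\}\ge \tfrac12(2v+1)$ when $2v+1\le n$, which already gives roughly $v\ge \frac{vs_f(x,y)}{2u_M}-1$. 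To land exactly on the claimed bound $\min\{\frac{vs_f(x,y)}{2u_M},\,n-1\}$, I would instead count more carefully: restrict attention to $z\in[x-v,x]\cap[n]$ (the "left'' half), on which Claim~\ref{clm:lin}'s proof shows one of $(x,z),(y,z)$ is violated; there are $\min\{v+1,\,x\}$ such $z$ and they give $\min\{v+1,\cdot\}$ distinct violated pairs incident to the pair $\{x,y\}$ (an injective correspondence since the second coordinate $z$ determines the pair), and $v+1=\left\lceil\frac{vs_f(x,y)}{2u_M}\right\rceil\ge \frac{vs_f(x,y)}{2u_M}$, while the $n$-truncation accounts for hitting the boundary of $[n]$, yielding at least $\min\{\frac{vs_f(x,y)}{2u_M},\,n-1\}$ violated pairs (the $n-1$ rather than $n$ because one of the $z$ values can be $x$ or $y$ itself, contributing a degenerate pair we discard).

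The main obstacle — really the only delicate point — is bookkeeping the injectivity/double-counting when translating "for each $z$, some pair incident to $x$ or $y$ is violated'' into "there are many \emph{distinct} violated pairs,'' together with correctly handling the truncation at the ends of $[n]$ so that the stated bound $\min\{\frac{vs_f(x,y)}{2u_M},\,n-1\}$ comes out with the right constant (no stray factor of $2$ and the correct $n-1$ versus $n$). I would resolve this by working with one side only (say $z\le x$), where each $z$ maps to a pair whose larger-or-distinguished endpoint pins down $z$ uniquely, making the correspondence manifestly injective; combined with $\left\lceil\cdot\right\rceil\ge(\cdot)$ and discarding the at most one degenerate pair, this gives the claim with no slack to spare.
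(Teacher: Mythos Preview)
Your plan is essentially the paper's: invoke Claim~\ref{clm:lin} and count the distinct violated pairs incident to $\{x,y\}$ produced as $z$ ranges over an interval. The paper even uses exactly your one-sided idea, taking $z\in\{x-v,\dots,x-1\}$ together with $(x,y)$ to get $v+1\ge \lceil vs_f(x,y)/(2u_M)\rceil$ distinct violated pairs.

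There is one real gap in your write-up, though. Restricting to the left half gives $\min\{v+1,\,x\}$ values of $z$, and you then assert this already yields $\min\{vs_f(x,y)/(2u_M),\,n-1\}$ violated pairs. But $x$ can be much smaller than $n-1$ (e.g.\ $x=2$), so the left side alone does \emph{not} handle the truncation; your sentence ``the $n$-truncation accounts for hitting the boundary'' is precisely the step that is missing. The paper closes this with a three-case split: if $x-v\ge 1$ use the left window $\{x-v,\dots,x-1\}$; otherwise, if $y+v\le n$, use the symmetric right window $\{y+1,\dots,y+v\}$; and if \emph{both} sides overflow, then $[x-v,y+v]\supseteq[n]$, so Claim~\ref{clm:lin} applies to every $z\in[n]\setminus\{x,y\}$, which together with $(x,y)$ gives $n-1$ distinct violated pairs. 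Once you add this case analysis (or symmetrically use whichever side does not overflow), your argument matches the paper's and the injectivity bookkeeping you describe is exactly right: distinct $z$'s in a one-sided window give distinct pairs, with $(x,y)$ itself accounting for the ``$+1$''.
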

\begin{proof}
Let $v=\left\lceil \frac{vs_{f}(x,y)}{2\cdot u_M}\right\rceil-1$. If $x-v\ge 1 $ then $f$ violates either $(x,z)$ or $(y,z)$ for each $z\in \{x-v,\dots ,x-1\}$. Therefore there are $v+1\ge \left\lceil \frac{vs_{f}(x,y)}{2\cdot u_M}\right\rceil$ pairs including $(x,y)$ that are violated. 
Similarly, if $y+v\le n$, then one gets at least $v+1$ violated pairs $\{(u,z)\}\cup (x,y)$ for all $z$ in $\{y+1,\dots,y+v\}$ and $u\in \{x,y\}$. Finally, if $x-v<1$ and $y+v>n$ then $f'$ violates at least one of $(x,z)$ and $(y,z)$ for $z\in [n]\setminus\{x,y\}$. 
Thus there are at least $n-1$ violated pairs including $(x,y)$.
\end{proof}

Let $M$ be the maximum weight matching in $G_{f}$. Let $M_1$ be the set of pairs in $M$ with violation score at most $u_M\cdot (n-1)$ and $M_2=M\setminus M_1$. By \Clm{many}, 
each of the edges in $M_1$ should contribute at least $\frac{vs_{f}(M_1)}{2\cdot u_M}$ violated pairs.
 Let $g:[n]\mapsto [a,b]$ be the function in $\cP$ that is closest to $f$. Note that $vs_{f}(x,y)=|f(x)-g(x)|+|f(y)-g(y)|\le 2r$. Hence, the violation score of any pair can not exceed $2r$. 
Therefore there are at least $\frac{vs_{f}(M_2)}{2r}$ edges in $M_2$. By \Clm{many}, $M_2$ must 
contribute at least $\frac{vs_f(M_2)}{2r}\cdot (n-1)$ violated pairs. Since each violated pair is contributed by the edge of $M$ at most twice, the number of violated pairs in 
$M$ is at least
 $$\frac{1}{2}\left(\frac{vs_f(M_1)}{2\cdot u_M}+vs_f(M_2)\cdot \frac{n-1}{2r}\right)\ge \frac{vs_f(M)}{4}\cdot\min\left\{\frac{1}{u_M},\frac{n-1}{r}\right\}=\frac{\varepsilon}{4}\cdot n\cdot \min\left\{\frac{r}{u_M},n-1\right\} $$

The last equality follows from the fact that $\varepsilon=dist(f,\cP)=\frac{L_1(f,\cP)}{n\dot r}=\frac{vs_f(M)}{n\dot r}$. Let $\frac{u_M}{u_m}=C$, a constant.  Note that the number of violated pairs is at least $\frac{\varepsilon}{4C}\cdot |P_f|$.
 Therefore we have the following lemma.
\begin{lemma}\label{lem:prob}
If $f$ is $\eps$-far from $\cP$, then $f$ picks up a violated pair with probability at least $\frac{\eps}{4C}$ where $C=\frac{u_M}{u_m}$ is a constant.
\end{lemma}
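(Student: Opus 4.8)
The goal is to show that the Line Tester picks up a violated pair with probability at least $\frac{\varepsilon}{4C}$ when $f$ is $\varepsilon$-far from $\cP$, where the probability is over a uniformly random draw from $P_f$.

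My plan is to bound the ratio $\frac{V(P_f)}{|P_f|}$ from below, where $V(P_f)$ is the number of violated pairs. The numerator is handled by the machinery already developed: by \Lem{l1-mwm}, the maximum weight matching $M$ in $G_f$ satisfies $vs_f(M) = L_1(f,\cP) = \varepsilon n r$. I split $M = M_1 \cup M_2$ where $M_1$ collects the pairs whose violation score is at most $u_M\cdot(n-1)$. The key input is \Clm{many}: each violated pair $(x,y)$ forces at least $\min\{vs_f(x,y)/(2u_M),\, n-1\}$ violated pairs in total (of the form $(x,z)$ or $(y,z)$). For $M_1$ the minimum is realized by the first term, so $M_1$ contributes at least $\sum_{(x,y)\in M_1} vs_f(x,y)/(2u_M) = vs_f(M_1)/(2u_M)$ violated pairs; for $M_2$ each pair has violation score in $(u_M(n-1), 2r]$ (the upper bound $2r$ because $vs_f(x,y) = |f(x)-g(x)| + |f(y)-g(y)| \le 2r$), so the number of $M_2$-pairs is at least $vs_f(M_2)/(2r)$ and each forces $n-1$ violated pairs, giving at least $vs_f(M_2)\cdot\frac{n-1}{2r}$ violated pairs. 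Since distinct pairs of $M$ are vertex-disjoint, a given violated pair is charged by at most two pairs of $M$ (one through each endpoint), so $V(P_f) \ge \frac12\left(\frac{vs_f(M_1)}{2u_M} + vs_f(M_2)\cdot\frac{n-1}{2r}\right)$.

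Next I lower-bound this by taking out the minimum of the two coefficients: it is at least $\frac{vs_f(M)}{4}\cdot\min\{1/u_M,\, (n-1)/r\} = \frac{\varepsilon n r}{4}\cdot\min\{1/u_M,(n-1)/r\} = \frac{\varepsilon}{4}\cdot n\cdot\min\{r/u_M,\, n-1\}$. On the denominator side, recall $P_f$ was defined as pairs $(x,y)$ with $x<y$ and $y-x\le r/u_m$, and it was already argued that all violated pairs lie in $P_f$; there are at most $n\cdot\min\{n-1,\, r/u_m\}$ such pairs. Dividing,
\[
\frac{V(P_f)}{|P_f|} \;\ge\; \frac{\frac{\varepsilon}{4}\, n\,\min\{r/u_M,\, n-1\}}{n\,\min\{n-1,\, r/u_m\}} \;\ge\; \frac{\varepsilon}{4}\cdot\frac{\min\{r/u_M, n-1\}}{\min\{r/u_m, n-1\}} \;\ge\; \frac{\varepsilon}{4C},
\]
using $u_m \le u_M$ so $C = u_M/u_m \ge 1$ and each term of the numerator's min is at least $1/C$ times the corresponding term of the denominator's min. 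This gives exactly the claimed bound, since the tester rejects iff it draws one of the $V(P_f)$ violated pairs.

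The main obstacle is the bookkeeping around the two cases $M_1$ and $M_2$ and making the coefficient comparison clean — in particular confirming that the charging argument is tight enough (each violated pair charged at most twice) and that the $\min$-of-ratios step genuinely collapses to $1/C$ rather than something weaker. A secondary subtlety is the reduction to the $l = -u$ assumption: by \Cor{mwm} the maximum weight matchings of $f$ and its transform $g$ agree, and \Clm{negpos} preserves every edge's violation score, so running the analysis in the symmetric-bound setting loses nothing; this should be noted but not belabored. I would also double-check the edge definition of $u_M$ and $u_m$ (max/min of $u$ over $[n-1]$) so the inequalities $vs_f(x,y) \le 2u_M(y-x)$ and $f(x)-f(y) \ge u_m(y-x)$ used above and in \Clm{many}, \Clm{lin} are applied consistently.
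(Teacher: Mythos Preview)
Your proposal is correct and follows essentially the same approach as the paper: split the maximum weight matching $M$ into $M_1$ and $M_2$ at the threshold $u_M(n-1)$, apply \Clm{many} to each part, use the double-charging bound, and compare the resulting count against $|P_f|$. Your extra justification that $\min\{r/u_M,\,n-1\}/\min\{r/u_m,\,n-1\}\ge 1/C$ (via the termwise inequality) is a small clarification the paper leaves implicit, but otherwise the arguments are identical.
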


\subsection{Testers for the hypergrid} \label{sec:hyper-test}
The hypergrid testers are easy consequences of the dimension
reduction and the line testers. Let $\cL_{n,d}$ denote the set of axis parallel lines in $[n]^d$.
\begin{shaded}
{\bf Hypergrid-Tester ($f$)}
\begin{asparaenum}
\itemsep0em 
\item Choose a line $\ell$ $i$ u.a.r. from $\cL_{n,d}$.\label{item:1b}
\item Run \textbf{Line-Tester($\frest{f}{\ell}$)}.\label{item:1c}
\item Repeat step \ref{item:1b}. $O(\frac{d}{\varepsilon})$ times.
\end{asparaenum}
\end{shaded}

\begin{lemma}\label{lem:hyper-bst} Consider a function $f$ that is $\varepsilon$-far from $\cP$. The  probability of rejection of Hypergrid Tester is at least $2/3$.
\end{lemma}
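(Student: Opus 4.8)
The plan is to combine the dimension-reduction lemma (\Lem{grid-to-line-gen}) with the single-line guarantee (\Lem{prob}) via a standard amplification argument. First I would set up the distance bookkeeping: since $f$ is $\varepsilon$-far from $\cP$ on $[n]^d$, by the scale-invariant definition of distance we have $L_1(f,\cP) = \varepsilon \cdot n^d \cdot r$. By \Lem{grid-to-line-gen}, $\sum_{\ell \in \cL_{n,d}} L_1(\frest{f}{\ell},\cP) \geq L_1(f,\cP)/2$. Dividing by $|\cL_{n,d}| = d\cdot n^{d-1}$, the expected line-distance when $\ell$ is drawn uniformly at random is
\[
\Exp_{\ell}\big[L_1(\frest{f}{\ell},\cP)\big] \;\geq\; \frac{\varepsilon \cdot n^d \cdot r}{2 \, d \, n^{d-1}} \;=\; \frac{\varepsilon \, n \, r}{2d}.
\]
Converting to the normalized line-distance $\distline(\frest{f}{\ell},\cP) = L_1(\frest{f}{\ell},\cP)/(n r)$, this says $\Exp_\ell[\distline(\frest{f}{\ell},\cP)] \geq \varepsilon/(2d)$.

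Next I would apply a Markov/averaging argument to conclude that a random line is reasonably far with reasonable probability: since $\distline(\frest{f}{\ell},\cP) \leq 1$ always, $\Exp_\ell[\distline(\frest{f}{\ell},\cP)] \geq \varepsilon/(2d)$ forces $\Pr_\ell[\distline(\frest{f}{\ell},\cP) \geq \varepsilon/(4d)] \geq \varepsilon/(4d)$. Call a line \emph{good} if $\distline(\frest{f}{\ell},\cP) \geq \varepsilon/(4d)$. Then in one iteration of the Hypergrid-Tester, the probability of picking a good line and having the Line-Tester reject on it is at least $\frac{\varepsilon}{4d} \cdot \frac{\varepsilon/(4d)}{4C} = \frac{\varepsilon^2}{64 C d^2}$, where the second factor uses \Lem{prob} applied to $\frest{f}{\ell}$ with its own proximity parameter $\varepsilon/(4d)$.

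Finally I would amplify: repeating the iteration $t = O(d^2/\varepsilon^2)$ times (absorbing the constant $C$), the probability that no iteration rejects is at most $\big(1 - \frac{\varepsilon^2}{64 C d^2}\big)^t \leq e^{-t \varepsilon^2/(64 C d^2)} \leq 1/3$ for a suitable choice of the constant inside the $O(\cdot)$. Hence the rejection probability is at least $2/3$, proving \Lem{hyper-bst}.

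\textbf{The main obstacle} I anticipate is a mismatch between the repetition count $O(d/\varepsilon)$ written in the Hypergrid-Tester pseudocode and the $O(d^2/\varepsilon^2)$ that the argument above naturally produces — the cleaner bound would come from noting that \Lem{prob} actually gives rejection probability proportional to the line's \emph{distance} (not a fixed $\varepsilon/(4C)$), so one should instead argue directly that $\Exp_\ell[\text{rejection prob on }\ell] \geq \frac{1}{4C}\Exp_\ell[\distline(\frest{f}{\ell},\cP)] \geq \frac{\varepsilon}{8Cd}$, which makes $O(d/\varepsilon)$ repetitions suffice. I would therefore be careful to push the expectation through \Lem{prob} linearly rather than first passing to a Markov-type "good line" event, since that loses a factor and inflates the iteration count. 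A secondary care point is checking that the Line-Tester's one-query rejection probability bound from \Lem{prob} is stated for a single query; since each Hypergrid-Tester iteration runs the Line-Tester once, this lines up, but the constant $C$ must be tracked honestly through the final $O(\cdot)$.
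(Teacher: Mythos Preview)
Your proposal is correct, and the approach you settle on in your obstacle paragraph --- pushing the expectation through \Lem{prob} linearly to get a per-iteration rejection probability of $\Omega(\varepsilon/d)$, then amplifying with $O(d/\varepsilon)$ repetitions --- is exactly what the paper does. The Markov/``good line'' detour in your plan section is unnecessary and, as you yourself note, loses a factor; the paper skips it entirely and goes straight to $\Exp_\ell[\text{rejection prob}] \geq \tfrac{1}{4C}\Exp_\ell[\distline(\frest{f}{\ell},\cP)] \geq \Omega(\varepsilon/d)$.
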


\begin{proof} From \Lem{grid-to-line-gen} we have $ \sum_{\ell \in \cL(n,d)}[L_1(\frest{f}{\ell},\cP)] \geq L_1(f,\cP)/2$. This can also be stated as 
$$\Exp_{\ell\sim \cL_{n,d}}[dist(\frest{f}{\ell},\cP)]\ge \frac{dist(f,\cP)}{2d}$$ where distances are measured with respect to $[n]$ and $[n]^d$ domains respectively. \Lem{prob} implies that the probability of picking up a violated pair in step \ref{item:1c}. is at least 
$\frac{\Exp_{\ell\sim \cL_{n,d}}[dist(\frest{f}{\ell},\cP)]}{2}\ge \frac{\varepsilon}{4d}$. Therefore, repeating it $O(d/\varepsilon)$ times boosts the rejection probability to $2/3$.

\end{proof}


\bibliographystyle{alpha}
\bibliography{derivative-testing}

\end{document}